\documentclass[submission,copyright,creativecommons,sharealike]{eptcs}
\pdfoutput=1

\usepackage[utf8]{inputenc}
\usepackage{amsmath}
\usepackage{amsthm}
\usepackage{amssymb}
\usepackage{physics}
\usepackage{tikz}
\usepackage{tikz-cd}
\usepackage{Cobordism}
\usepackage{tikzit}
\usepackage{circuitikz}
\usepackage[numbers]{natbib}
\usepackage{hyperref}
\usepackage[T1]{fontenc}
\usepackage{url}

\pgfsetlayers{bottom,background,main,internal,foreground,label,cobordism,top,selectionbox,edgelayer,nodelayer}

\input{styles.tikzdefs}

\tikzstyle{discarding}=[fill=white, draw=black, shape=circle, style=upground]
\tikzstyle{smalldiscarding}=[fill=white, draw=black, style=upground, scale=0.6]
\tikzstyle{backdiscard}=[fill=white, draw=black, shape=circle, style=downground, scale=0.5]
\tikzstyle{smallbackdiscard}=[fill=white, draw=black, shape=circle, style=downground, scale=0.5]
\tikzstyle{state}=[fill=white,rounded corners, draw=black, style=triang, tikzit shape=rectangle]
\tikzstyle{state_small}=[fill=white, draw=black, style={triang_small}, tikzit shape=rectangle]
\tikzstyle{state_lesspad}=[fill=white, draw=black, style={triang_lesssep}, tikzit shape=rectangle]
\tikzstyle{kstate}=[fill=white, draw=black, style=kpoint, tikzit shape=rectangle]
\tikzstyle{kstateconj}=[fill=white, draw=black, style=kpoint conjugate, tikzit shape=rectangle]
\tikzstyle{kstateBIG}=[fill=white, draw=black, style=big kpoint, tikzit shape=rectangle]
\tikzstyle{effect}=[fill=white,rounded corners, draw=black, style=triangdag]
\tikzstyle{effect_small}=[fill=white, draw=black, style={triangdag_small}]
\tikzstyle{keffect}=[fill=white, draw=black, style=kpoint adjoint]
\tikzstyle{keffectconj}=[fill=white, draw=black, style=kpoint transpose]
\tikzstyle{morphdag}=[style=mapdag]
\tikzstyle{morph}=[style=map]
\tikzstyle{morphtrans}=[style=maptrans]
\tikzstyle{morphconj}=[style=mapconj]
\tikzstyle{CPMmorph}=[style=dmap]
\tikzstyle{CPMmorphconj}=[style=dmapconj]
\tikzstyle{CPMmorphdag}=[style=dmapdag]
\tikzstyle{CPMmorphtrans}=[style=dmaptrans]
\tikzstyle{CPMstate}=[fill=white, draw=black, style=triang, doubled]
\tikzstyle{CPMstateBIG}=[fill=white, draw=black, style={triang_lesssep}, doubled]
\tikzstyle{CPMkstate}=[fill=white, draw=black, style=kpoint, tikzit shape=rectangle, doubled]
\tikzstyle{CPMkstateconj}=[fill=white, draw=black, style=kpoint conjugate, tikzit shape=rectangle, doubled]
\tikzstyle{CPMkstateBIG}=[fill=white, draw=black, style=big kpoint, tikzit shape=rectangle, doubled]
\tikzstyle{CPMkeffect}=[fill=white, draw=black, style=kpoint adjoint, doubled]
\tikzstyle{CPMkeffectconj}=[fill=white, draw=black, style=kpoint transpose, doubled]
\tikzstyle{UHfB}=[fill=white, draw=black, style=triangdag, doubled, inner sep=-2pt]
\tikzstyle{leak}=[style=tinypoint, regular polygon rotate=-90]
\tikzstyle{leakfill}=[style=tinypoint, regular polygon rotate=-90, fill=black]
\tikzstyle{Z}=[style=dot, fill=green]
\tikzstyle{X}=[style=dot, fill=red]
\tikzstyle{black_dot}=[style=dot, fill=black]
\tikzstyle{white_dot}=[style=dot, fill=white]
\tikzstyle{qblack_dot}=[style=ddot, fill=black]
\tikzstyle{qwhite_dot}=[style=ddot, fill=white]
\tikzstyle{whitephase}=[style=wphase dot, fill=white]
\tikzstyle{qredphase}=[style=phase dot, fill=red]
\tikzstyle{qgreenphase}=[style=phase dot, fill=green]
\tikzstyle{CPMbox}=[style=hadamard, doubled]
\tikzstyle{box}=[style=hadamard,rounded corners]
\tikzstyle{tallbox}=[style=hadamard, minimum height=7mm]
\tikzstyle{roundedtallbox}=[style=hadamard, rounded corners, minimum height=7mm]
\tikzstyle{bigbox}=[style=hadamard,rounded corners, minimum height=5mm, minimum width=7mm]
\tikzstyle{widebigbox}=[style=hadamard, minimum height=6mm, minimum width=18mm]
\tikzstyle{WIDEBOI}=[style=hadamard, minimum height=6mm, minimum width=40mm]
\tikzstyle{WIDEstate}=[fill=white, draw=black, style=triang, tikzit shape=rectangle, minimum width=40mm]
\tikzstyle{bigboxCPM}=[style=hadamard, minimum height=6mm, minimum width=8mm, doubled]
\tikzstyle{antipode}=[style=anti]
\tikzstyle{WIDEcutstate}=[style={state_doublecut}, text depth=2mm, minimum width=8mm]
\tikzstyle{WIDERcutstate}=[style={state_doublecut}, text depth=0mm, minimum width=36mm]
\tikzstyle{WIDEcutstateCPM}=[style={state_doublecut}, text depth=2mm, minimum width=8mm, doubled]
\tikzstyle{WIDEcuteffect}=[style={effect_doublecut}, text depth=2mm, minimum width=8mm]

\tikzstyle{dottededge}=[-, dotted]
\tikzstyle{double edge}=[-, style=doubled, draw=black, tikzit draw={rgb,255: red,191; green,0; blue,64}]

\usetikzlibrary{circuits.ee.IEC}
\usetikzlibrary{calc}


\newcommand{\morph}[1]{\xrightarrow{#1}}

\newcommand{\cat}[1]{\mathcal{#1}}
\newcommand{\cpm}{\mathsf{CPM}}
\newcommand{\fhilb}{\mathsf{FHilb}}

\newcommand{\optic}{\mathsf{Optic}}
\newcommand{\doptic}{\mathsf{Optic}^\dag}

\newcommand{\comb}{\mathsf{Comb}}
\newcommand{\dcomb}{\mathsf{Comb}^\dag}
\newcommand{\COMB}{\mathsf{COMB}}
\newcommand{\OPTIC}{\mathsf{OPTIC}}

\newcommand{\excl}{\mathord{!}}
\newcommand{\simcomb}{\sim_{\text{comb}}}
\newcommand{\simopt}{\sim_{\text{opt}}}
\newcommand{\simtau}{\sim_{\tau}}
\newcommand{\simsigma}{\sim_{\sigma}}

\newcommand{\tikzfigscale}[2]{\scalebox{#1}{\tikzfig{#2}}}

\makeatletter
\newcommand{\xRightarrow}[2][]{\ext@arrow 0359\Rightarrowfill@{#1}{#2}}
\makeatother

\makeatletter
\def\namedlabel#1#2{\begingroup
   \def\@currentlabel{#2}%
   \label{#1}\endgroup
}
\makeatother

\theoremstyle{definition}
\newtheorem{defn}{Definition}
\theoremstyle{plain}
\newtheorem{prop}{Proposition}
\theoremstyle{plain}
\newtheorem{lem}{Lemma}
\theoremstyle{plain}

\theoremstyle{plain}

\theoremstyle{remark}
\newtheorem*{remark}{Remark}
\theoremstyle{definition}
\newtheorem{example}{Example}
\theoremstyle{definition}
\newtheorem{counterexample}{Counterexample}

\newcommand{\Sym}{\mathsf{Sym}}
\newcommand{\Mon}{\mathsf{Mon}}
\newcommand{\Cat}{\mathsf{Cat}}

\title{Coend Optics for Quantum Combs}
\author{James Hefford
\institute{University of Oxford, UK}
\and
Cole Comfort
\institute{University of Oxford, UK}
}

\begin{document}

\maketitle

\begin{abstract}
  We compare two possible ways of defining a category of 1-combs, the first intensionally as coend optics and the second extensionally as a quotient by the operational behaviour of 1-combs on lower-order maps.
  We show that there is a full and bijective on objects functor quotienting the intensional definition to the extensional one and give some sufficient conditions for this functor to be an isomorphism of categories.
  We also show how the constructions for 1-combs can be extended to produce polycategories of $n$-combs with similar results about when these polycategories are equivalent.
  The extensional definition is of particular interest in the study of quantum combs and we hope this work might produce further interest in the usage of optics for modelling these structures in quantum theory.
\end{abstract}

\section{Introduction}
\label{sec:intro}
The traditional way in which physical systems are modelled is by considering a state space which evolves according to processes which act on that space.
For example, a quantum circuit is traditionally viewed in terms of linear operators being applied to a Hilbert space; electrical circuits in terms of certain operators acting on phase space; probabilistic theories in terms of stochastic maps acting on probability spaces.

This approach has proven to be amenable to categorical analysis.
For example, the ZX-calculus \cite{coecke_zx,zxwcs}, graphical affine algebra \cite{gaa,2106.07763,glagr} and markov categories \cite{fritz} have all been successful in formally modelling these respective classes of systems using the theory of monoidal categories.
Moreover, categorical quantum mechanics \cite{abramsky_cqm,coecke_picturalism,gogioso_cpt} and the framework of generalised/operational probabilistic theories \cite{barrett_gpts,chiribella_probabilistic} provide semantics for modelling more general quantum-like theories.

However, the approach of modelling systems merely in terms of the action of operators on the state space may not fully capture the behaviour of the system.
When the collection of operators is itself regarded as the state space, this traditional approach gives little insight into the evolution of this new, ``higher order'' state space.
What is missing is a theory of second order processes, a theory of processes which themselves act on (first order) processes.
Or indeed a theory of $n^\text{th}$ order processes which act on $(n-1)^\text{th}$ order processes.

In the theory of quantum circuits, the domain which we are chiefly interested in this paper, these higher order processes are known as quantum supermaps \cite{chiribella_supermaps,kissinger_caus}.
In their most abstract formulation, it is known that there exist quantum supermaps, such as the quantum \texttt{switch}, which go beyond the standard quantum circuit model by not possessing a factorisation as a circuit with definite causal ordering of gates and no time-loops \cite{chiribella_switch,chiribella_switch2}.
Yet there is a class of supermaps which can be adequately modelled by ``circuits with holes'' \cite{chiribella_circuits, chiribella_networks} where one has a framework quantum circuit with slots that can be filled with first-order maps.
Indeed, all second-order deterministic single-party supermaps on quantum channels possess a factorisation as a circuit \cite{chiribella_supermaps,eggeling,kissinger_caus}.

In this paper we restrict our attention to these ``circuits with holes'', otherwise known as $n$-combs \cite{chiribella_circuits, chiribella_networks}.
For example, 1-combs are often drawn suggestively as diagrams of the form:

\begin{equation}\label{eq:comb}
  \tikzfigscale{0.9}{figs/comb_fact}
\end{equation}

Some care is needed though to make these drawings rigorous and to demonstrate that a suitable (possibly symmetric monoidal) category of combs can be defined.
In much of the quantum literature it is assumed that the base category of first order processes is compact closed, or at least embeds into one.
In this case it is possible to bend input and output wires to express combs as maps without holes and use the drawing \eqref{eq:comb} in an unambiguous way; for example, see \cite{chiribella_circuits,kissinger_caus}.
Outside of the quantum literature there are approaches to defining comb diagrams without the assumption of closure \cite{roman_coend,roman_comb}, but it is not clear when this coincides with the quantum definition.

In this article we focus on defining categories of combs without any assumptions of closure on the category of first order processes.
In Section \ref{sec:combs}, we compare two constructions which take an arbitrary symmetric monoidal category and produce a symmetric monoidal category of combs.
Both of these constructions represent a comb as a pair of morphisms $(f,g)$ from the theory of first order processes, quotiented by their behaviour on first-order processes.

The first construction, which we define in Subsection \ref{subsec:extensionalcombs}, $\comb:\Sym\Mon\Cat\to \Sym\Mon\Cat$, quotients combs by their extensional behaviour: two combs are equal when they produce the same output on all first-order inputs.
In other words this identifies two combs when they appear to be the same when probed with all first order processes $\lambda$:
\begin{equation*}
  (f,g)\simcomb (f',g') \hspace{1cm} \text{when} \hspace{1cm} \tikzfigscale{0.9}{figs/intro_comb} \hspace{1cm} \forall \lambda
\end{equation*}
This equivalence relation has been discussed before \cite{coecke_resources} \footnote{we note that our category of combs is distinct from that developed there: the objects of their category being different than those studied in this document} and is perhaps the one that would be most immediate to those studying quantum theory.

The second construction, which we review in Subsection \ref{subsec:optics},  is that of the category of coend optics, $\optic:\Sym\Mon\Cat\to\Sym\Mon\Cat$ (which we shall henceforth just call optics) \cite{clarke_profunctor,riley_optics,roman_optics,pastro_street}.
Optics are used to encompass  bidirectional data accessors familiar to the computer science community such as lenses, prisms and grates, amongst many others.
Their usage to model combs and more general ``circuits with holes'' has been described in \cite{roman_coend,roman_comb}.
In contrast to the previous construction this quotients the combs by their intensional behaviour, allowing first-order maps to slide along the shared environment connecting the two factors together:
\begin{equation*}
  \begin{tikzpicture}[baseline={([yshift=-.5ex]current bounding box.center)}]
    \node[Pants, top] (pants) {};
    \node[Cyl, anchor=top] (tube) at (pants.leftleg){};
    \node[Copants, bot, anchor=leftleg] (copants) at (tube.bot) {};
    \node[Top3D] at (copants.rightleg) {};
    \node[Bot3D] at (pants.rightleg) {};
    \node[Bot3D] at (tube.center) {};
    \begin{scope}[internal string scope]
      \node[sq tiny label] (g) at (pants.center) {$g$};
      \node[sq tiny label] (f) at (copants.center) {$f$};
      \node[sq tiny label] (v) at (copants.leftleg) {$v$};
      \draw (f.center) to (copants.belt);
      \draw (g.center) to (pants.belt);
      \draw (f.center) to [out=165,in=-90] (copants.leftleg) to (tube.top) to [out=90,in=-165] (g.center);
      \draw (f.center) to [out=15,in=-90] (copants.rightleg);
      \draw (g.center) to [out=-15,in=90] (pants.rightleg);
    \end{scope}
  \end{tikzpicture}
  \quad
  \simopt
  \quad
  \begin{tikzpicture}[baseline={([yshift=-.5ex]current bounding box.center)}]
    \node[Pants, top] (pants) {};
    \node[Cyl, anchor=top] (tube) at (pants.leftleg){};
    \node[Copants, bot, anchor=leftleg] (copants) at (tube.bot) {};
    \node[Top3D] at (copants.rightleg) {};
    \node[Bot3D] at (pants.rightleg) {};
    \node[Bot3D] at (tube.center) {};
    \begin{scope}[internal string scope]
      \node[sq tiny label] (g) at (pants.center) {$g$};
      \node[sq tiny label] (f) at (copants.center) {$f$};
      \node[sq tiny label] (v) at (pants.leftleg) {$v$};
      \draw (f.center) to (copants.belt);
      \draw (g.center) to (pants.belt);
      \draw (f.center) to [out=165,in=-90] (copants.leftleg) to (tube.top) to [out=90,in=-165] (g.center);
      \draw (f.center) to [out=15,in=-90] (copants.rightleg);
      \draw (g.center) to [out=-15,in=90] (pants.rightleg);
    \end{scope}
  \end{tikzpicture}
\end{equation*}

In Subsection \ref{subsec:equivalence} we show that there is always a full and bijective on objects monoidal functor from optics to the extensional definition, $\optic(\cat{C})\morph{}\comb(\cat{C})$.
We then give some sufficient conditions for this functor to exhibit an isomorphism of symmetric monoidal categories.
In particular we show that when the category of first-order processes is cartesian and there exists a state for every type or when it is compact closed, the two definitions coincide.
We also show that in the case of the category of unitaries between Hilbert spaces, the definitions again coincide.
This case (alongside compact closed categories) is particularly important for quantum theory.
We leave it as future work to fully characterise when $\optic(\cat{C})\cong\comb(\cat{C})$ and note that there are important cases of combs not covered by the sufficient conditions proven in this work.

In section \ref{sec:CPM} we specialise to the case where the base category $\cat C$ is $\dag$-compact closed  and restrict to the subcategory where the  maps constituting the combs are the daggers of each other,  that is where $g=f^\dag$ in \eqref{eq:comb}.
This subcategory of the extensional category of combs collapses to the ${\sf CPM}$-construction \cite{selinger_cpm} which is used to generate a category of generalised completely positive maps from some underlying category.

In the final section \ref{sec:ncombs} we turn our attention to $n$-combs, which map $n$ first-order processes into first-order process.
These combs naturally form a polycategory and we show that in the presence of compact closure, the extensional and intensional definitions once again coincide.

\section{Combs}
\label{sec:combs}

In this section we define two notions of factorisable 1st order single input combs.
These categories are given by functors $\Sym\Mon\Cat \to\Sym\Mon\Cat$ whose morphisms are given by pairs of maps composable along an interface, as per the right hand side of \eqref{eq:comb}.
In Subsection \ref{subsec:extensionalcombs} we establish the extensional definition; in Subsection \ref{subsec:optics} we review the intensional definition of optics; and in Subsection \ref{subsec:equivalence} we give sufficient conditions under which both definitions coincide.

\subsection{Extensional combs}\label{subsec:extensionalcombs}

Let us begin by considering possible extensional definitions of combs.
Firstly, one could ask that the combs are equal as morphisms in the original category when we extend their inputs:
\begin{equation}\label{eq:braidequiv}
  (f,g)\simsigma(f',g') \quad \iff \quad \tikzfigscale{0.9}{figs/comb_braid}
\end{equation}
While this is an equivalence relation on pairs of morphisms, it is not a congruence with respect to composition.
Suppose $(f,g)\simsigma(f',g')$ and $(h,k)\simsigma(h',k')$.
Then $(h,k)\circ(f,g) = ((1\otimes h)f,g(1\otimes k))\simsigma ((1\otimes h')f,g(1\otimes k')) = (h',k')\circ(f,g)$ which is not in general equivalent to $(h',k')\circ(f',g')$.

We could instead ask that two combs are equivalent if they are equal on all inputs to the comb:

\begin{equation*}
  (f,g)\simtau (f',g') \quad \iff \quad \tikzfigscale{0.9}{figs/comb_equivalence_bad}
\end{equation*}

This also forms an equivalence relation on pairs of morphisms, although it is too coarse.
Consider the free symmetric monoidal category generated by one object $A$, two states $\phi,\psi:I\to A$ and an effect $\excl:A\to I$ such that $\excl\circ\phi = \excl\circ\psi=1_I$.
Then $(1_I \otimes \psi, 1_I \otimes \excl) \simtau (1_I \otimes \phi, 1_I \otimes \excl)$; however evaluating these combs on the braid one finds,

\begin{equation*}
  \tikzfigscale{0.9}{figs/comb_equiv_fail}
\end{equation*}

So if we want comb to behave compatibly with the monoidal structure of the category, we need something stronger than equality on all inputs.

\begin{defn}[Extensional Comb Equivalence]
We say that two combs are equivalent if they are equal on all extended inputs:

\begin{equation}\label{eq:combequiv}
 (f,g)_E \simcomb (f',g')_{E'}\ \iff\  \tikzfigscale{0.9}{figs/comb_equivalence}
\end{equation}
\end{defn}

This definition subsumes both of the previous definitions, but in the compact closed case \eqref{eq:braidequiv} is sufficient to recover the full extensional equivalence.

\begin{prop}
  When $\cat{C}$ is compact closed $(f,g)\simcomb (f',g') \iff (f,g)\simsigma (f',g')$.
\end{prop}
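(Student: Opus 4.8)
The plan is to prove the two implications separately, with only the direction $\simsigma \Rightarrow \simcomb$ using compact closure. Throughout I represent a comb by a pair $f\colon A \to M \otimes X$ and $g\colon M \otimes Y \to B$, with $M$ the shared interface and $X,Y$ the two legs of the hole, so that plugging a (possibly ancilla-extended) map $\lambda\colon X \otimes W \to Y \otimes W'$ yields
\begin{equation*}
  E_\lambda(f,g) \;:=\; (g \otimes 1_{W'})\,(1_M \otimes \lambda)\,(f \otimes 1_W)\colon A \otimes W \to B \otimes W'.
\end{equation*}
The relation $\simcomb$ of \eqref{eq:combequiv} asks that $E_\lambda(f,g) = E_\lambda(f',g')$ for all such extended inputs, while $\simsigma$ of \eqref{eq:braidequiv} compares the single bent morphism $\widetilde{(f,g)} := (g \otimes 1_X)(1_M \otimes \sigma_{X,Y})(f \otimes 1_Y)\colon A \otimes Y \to B \otimes X$.

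First I would dispatch the easy implication $\simcomb \Rightarrow \simsigma$, which in fact needs no closure hypothesis and recovers the earlier remark that \eqref{eq:combequiv} subsumes \eqref{eq:braidequiv}. The point is that $\widetilde{(f,g)}$ is itself an extended input: taking $W = Y$, $W' = X$ and $\lambda = \sigma_{X,Y}$, one computes $E_{\sigma_{X,Y}}(f,g) = (g \otimes 1_X)(1_M \otimes \sigma_{X,Y})(f \otimes 1_Y) = \widetilde{(f,g)}$. Hence any two combs identified by $\simcomb$ agree on this particular probe and so are identified by $\simsigma$.

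For the converse I would use compact closure to show that the single morphism $\widetilde{(f,g)}$ already determines every extended input. The key claim is the reconstruction identity
\begin{equation*}
  E_\lambda(f,g) \;=\; \mathrm{Tr}^{Y}\!\big[(1_B \otimes \lambda)\,(\widetilde{(f,g)} \otimes 1_W)\big],
\end{equation*}
where $\mathrm{Tr}^{Y}$ is the canonical partial trace on $Y$ built from the cup and cap of the compact structure. Since the right-hand side depends on the comb only through $\widetilde{(f,g)}$, the hypothesis $\widetilde{(f,g)} = \widetilde{(f',g')}$ forces $E_\lambda(f,g) = E_\lambda(f',g')$ for every $W,W'$ and every $\lambda$, which is exactly $\simsigma \Rightarrow \simcomb$.

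The crux—and the only place compact closure genuinely enters—is verifying this identity by yanking. Inside $\widetilde{(f,g)}$ the legs $X,Y$ were pushed out to external wires by a single braid $\sigma_{X,Y}$; tracing over $Y$ then feeds the $X$-output of $\widetilde{(f,g)}$ (together with the bypassing ancilla $W$) into $\lambda$, and feeds $\lambda$'s $Y$-output back to the $Y$-input of $\widetilde{(f,g)}$, closing the loop. Expanding $\widetilde{(f,g)}$ and using the snake equations to cancel $\sigma_{X,Y}$ against the cup and cap of $\mathrm{Tr}^Y$ collapses the loop to $(g \otimes 1_{W'})(1_M \otimes \lambda)(f \otimes 1_W)$, i.e.\ to $E_\lambda(f,g)$. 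I expect the diagrammatic bookkeeping—tracking the ancilla $W$ past the braiding and checking that the trace is taken on the correct strand—to be the only real obstacle; no new idea beyond yanking is required. Combining the two implications then gives $\simcomb \iff \simsigma$ over any compact closed $\cat{C}$.
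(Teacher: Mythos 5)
Your proof is correct and takes essentially the same route as the paper: the forward implication is immediate because the braid is itself one of the extended probes, and the backward implication uses compact closure to express the evaluation at an arbitrary $\lambda$ in terms of the single bent morphism, your partial-trace reconstruction identity being precisely the paper's graphical wire-bending/yanking manipulation. No gap remains.
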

\begin{proof}
  The forwards direction is immediate.
  The backwards direction follows by graphical manipulation:
  \begin{equation*}
    \tikzfigscale{0.9}{figs/braid_comb}
  \end{equation*}
\end{proof}

\begin{defn}
  Given a symmetric monoidal category $\cat C$, the symmetric monoidal category of extensional combs $\comb(\cat{C})$ has:

\begin{description}
\item[Objects:] pairs $(A,A')$ of objects of $\cat{C}$.
\item[Morphisms:]
   $(f,g):(A,A')\morph{}(B,B')$ are equivalence classes of pairs of morphisms $f:A\morph{}E\otimes B$ and $g:E\otimes B'\morph{}A'$ of $\cat{C}$ under the comb equivalence relation $\simcomb$.

  Composition of morphisms is given by $(f',g')\circ(f,g) = ( (1\otimes f')f, g(1\otimes g'))$.

\item[Monoidal structure:]
  On objects $(A,A')\otimes (B,B') = (A\otimes B,A'\otimes B')$ and on morphisms:
  \begin{equation*}
    \tikzfigscale{0.9}{figs/comb_tensor}
  \end{equation*}

  The unit object is $(I,I)$ with structural isomorphisms given by $(\lambda,\lambda^{-1}):(A,A')\otimes (I,I) = (A\otimes I,A'\otimes I) \morph{} (A,A')$ and $(\rho,\rho^{-1})$.

  The symmetry is defined similarly.
\end{description}
\end{defn}

\begin{lem}
$\comb$ defines a functor $\Sym\Mon\Cat\to\Sym\Mon\Cat$.
\end{lem}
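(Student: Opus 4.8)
The plan is to split the statement into three tasks: (i) confirm that, for each symmetric monoidal category $\cat{C}$, the data of the preceding definition genuinely assemble into a symmetric monoidal category $\comb(\cat{C})$; (ii) define the action of $\comb$ on a symmetric monoidal functor $F:\cat{C}\to\cat{D}$ and check that $\comb(F)$ is again symmetric monoidal; and (iii) verify that $\comb$ preserves identities and composition of functors.

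For (i) the only substantive point is that $\simcomb$ is a congruence for both composition and $\otimes$; once this is in hand, associativity, the unit laws, and the monoidal and symmetry coherences all reduce to the corresponding equalities between the chosen representatives in $\cat{C}$, which hold on the nose with no appeal to the quotient. I would establish the congruence by observing that the action of a composite comb on an extended input factors through the actions of its two factors. Concretely, plugging a map $\mu:C\otimes X\to C'\otimes X$ into $(f',g')\circ(f,g)=((1\otimes f')f,\,g(1\otimes g'))$ produces exactly the map obtained by first plugging $\mu$ into $(f',g')$ to get some $\nu:B\otimes X\to B'\otimes X$ and then plugging $\nu$ into $(f,g)$. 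Hence if $(f,g)\simcomb(\tilde f,\tilde g)$, that is, if they agree on all extended inputs, then the composites agree on every such $\mu$ (the intermediate $\nu$ being unchanged), and symmetrically replacing the right factor changes only which $\nu$ the fixed left comb is probed with, again covered by equality on all extended inputs. The analogous factorisation for $(f,g)\otimes(h,k)$, using the symmetry to route the two frames past one another, shows that probing the tensor with a joint test reduces to extended probes of the individual combs, yielding congruence for $\otimes$. This is precisely the step at which the naive relations $\simsigma$ and $\simtau$ failed, so it is the crux of the argument.

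For (ii), since $F$ is strong symmetric monoidal I would set $(A,A')\mapsto(FA,FA')$ and send the class of $(f,g)$ to the class of $(\phi^{-1}\circ Ff,\,Fg\circ\phi)$, where $\phi$ denotes the comparison isomorphisms $FE\otimes F(-)\cong F(E\otimes -)$. Well-definedness on $\simcomb$-classes follows by applying $F$ to the defining equation of $\simcomb$ and using naturality of $\phi$ together with the fact that $F$ preserves the symmetry, so that $F$ sends extended inputs to extended inputs; functoriality of $\comb(F)$ and its preservation of $\otimes$, of the unit $(I,I)$, and of the symmetry then follow from the corresponding coherence properties of $F$.

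Finally, (iii) is immediate: both $\comb(\id{\cat{C}})$ and $\id{\comb(\cat{C})}$ act as the identity on representatives, and $\comb(G\circ F)$ agrees with $\comb(G)\circ\comb(F)$ because the comparison isomorphisms of a composite functor are built from those of its factors. I expect the congruence in (i) to be the only genuine obstacle; the remaining verifications are bookkeeping with the structural isomorphisms of $\cat{C}$ and of $F$.
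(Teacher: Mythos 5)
Your part (i) is correct, and it is the substantive content at the level of a single category: the observation that probing a composite comb (or a tensor of combs) with an extended input factors as probing each constituent comb with some extended input is exactly the right way to prove that $\simcomb$ is a congruence, and you correctly identify this as the point where $\simsigma$ and $\simtau$ break down. (For what it is worth, the paper states this lemma with no proof at all, so there is nothing to compare against; your (i) supplies the argument the paper omits.)

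The genuine gap is in part (ii), and it is not a repairable one. From $(f,g)\simcomb(\tilde f,\tilde g)$ in $\cat{C}$, applying $F$ and conjugating by the comparison isomorphisms shows only that the image combs agree on extended inputs of the form $\phi^{-1}\circ F\lambda\circ\phi$ with $\lambda$ a morphism of $\cat{C}$, i.e.\ on tests lying in the image of $F$. Well-definedness of $\comb(F)$ on $\simcomb$-classes requires agreement on \emph{all} extended inputs of $\cat{D}$, and these need not arise from $\cat{C}$ unless $F$ is, say, full and essentially surjective. In fact the claim fails in general: take $\cat{C}$ to be the paper's own counterexample category, the free commutative monoidal category on one object $A$ and an idempotent $f$, in which $(1_A,f)_I\simcomb(f,1_A)_I$ because every morphism of $\cat{C}$ is a tensor product of copies of $f$ and $1_A$ and hence commutes with $f\otimes 1$. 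Let $\cat{D}$ be the free commutative monoidal category on $A$, the idempotent $f$, and one additional free endomorphism $h:A\to A$, and let $F:\cat{C}\to\cat{D}$ be the evident strict symmetric monoidal inclusion. Probing the image combs with $\lambda=h$ yields $f\circ h$ and $h\circ f$, which are distinct in $\cat{D}$, so the images are not $\simcomb$-equivalent and $\comb(F)$ is not well defined on equivalence classes. Thus $\comb$, with the evident action on morphisms, is not a functor on all of $\Sym\Mon\Cat$ (in contrast to $\optic$, whose defining relations are equational and hence preserved by any symmetric monoidal functor). Functoriality of $\comb$ does hold under extra hypotheses — e.g.\ when the codomain is compact closed, since there $\simcomb$ coincides with the single equation $\simsigma$, which any symmetric monoidal functor preserves, or when one restricts to full, essentially surjective monoidal functors — but any purported proof of the lemma as stated must founder exactly where yours does.
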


\subsection{Optics}
\label{subsec:optics}

Optics provide another potential definition of combs; albeit an intensional one, as opposed to the extensional one described in the previous subsection.

We will use the graphical calculus of internal string diagrams/pointed profunctors to work with optics.
Internal string diagrams were first introduced in the $\mathsf{Vect}$-enriched case in \cite{vicary_bordism} and further explored in \cite{hu_tubes}.
The same sort of  graphical calculus was described in \cite{roman_coend} where the author shows that they form a 2-category of pointed profunctors.

Internal string diagrams consist of usual string diagrams for monoidal categories bounded inside cobordisms.
For example the identity, contravariant and covariant embeddings of the tensor product and tensor unit are drawn as follows:

\begin{equation}
\label{eq:cobgens}
  \begin{tikzpicture}[baseline={([yshift=-.5ex]current bounding box.center)}]
    \node[Cyl,top,bot] (tube) {};
   \begin{scope}[internal string scope]
     \node[sq tiny label] (f) at (tube.center) {$f$};
     \draw (tube.bottom) to (f.center) to (tube.top);
   \end{scope}
  \end{tikzpicture}
  \qquad
  \begin{tikzpicture}[baseline={([yshift=-.5ex]current bounding box.center)}]
    \node[Pants,top,bot] (pants) {};
   \begin{scope}[internal string scope]
    \node[sq tiny label] (f) at (pants.center) {$f$};
    \draw (pants.belt) to (f.center);
    \draw[bend left] (pants.leftleg) to (f.center);
    \draw[bend right] (pants.rightleg) to (f.center);
   \end{scope}
  \end{tikzpicture}
  \qquad
  \begin{tikzpicture}[baseline={([yshift=-.5ex]current bounding box.center)}]
    \node[Copants,top,bot] (copants) {};
   \begin{scope}[internal string scope]
    \node[sq tiny label] (f) at (copants.center) {$f$};
    \draw (copants.belt) to (f.center);
    \draw[bend right] (copants.leftleg) to (f.center);
    \draw[bend left] (copants.rightleg) to (f.center);
   \end{scope}
  \end{tikzpicture}
\qquad
  \begin{tikzpicture}[baseline={([yshift=-.5ex]current bounding box.center)}]
    \node[Cup, top, scale=1.2]  (cup) {};
   \begin{scope}[internal string scope]
    \node[sq tiny label] (f) at ($(cup.center)+(0,-.27)$)  {$f$};
     \draw (f.center) to (cup.center);
   \end{scope}
  \end{tikzpicture}
\qquad
  \begin{tikzpicture}[baseline={([yshift=-.5ex]current bounding box.center)}]
    \node[Cap, bot, scale=1.2] (cap) {};
   \begin{scope}[internal string scope]
    \node[sq tiny label] (f) at ($(cap.center)+(0,.2)$) {$f$};
     \draw (f.center) to (cap.center);
   \end{scope}
  \end{tikzpicture}
\end{equation}

The internal diagrams can be manipulated and composed as usual, but they are constrained by the topology of the cobordisms.
Moreover, when we compose these diagrams together, we are allowed to slide morphisms between them as follows:

\begin{equation*}
  \begin{tikzpicture}[baseline={([yshift=-.5ex]current bounding box.center)}]
    \node[Cyl,top,bot] (tube) {};
    \node[Cyl,bot,anchor=top] (tube2) at (tube.bot) {};
    \begin{scope}[internal string scope]
      \node[sq tiny label] (f) at (tube.center) {$f$};
      \draw (tube2.bot) to (f) to (tube.top);
    \end{scope}
  \end{tikzpicture}
  \ \sim \
  \begin{tikzpicture}[baseline={([yshift=-.5ex]current bounding box.center)}]
    \node[Cyl,top,bot] (tube) {};
    \node[Cyl,bot,anchor=top] (tube2) at (tube.bot) {};
    \begin{scope}[internal string scope]
      \node[sq tiny label] (f) at (tube2.center) {$f$};
      \draw (tube2.bot) to (f) to (tube.top);
    \end{scope}
  \end{tikzpicture}
\end{equation*}

The shapes in \eqref{eq:cobgens} are associative monoids and comonoids and there exist the following 2-cells which allow us to ``pop bubbles'' (as well as some extra coherence conditions):

\begin{equation}\label{eq:tube2cells}
  \begin{tikzpicture}[baseline={([yshift=-.5ex]current bounding box.center)}]
    \node[Pants, top] (pants) {};
    \node[Copants, bot, lowercob, anchor=leftleg] (copants) at (pants.leftleg) {};
    \node[Bot3D] at (pants.rightleg) {};
    \node[Bot3D] at (pants.leftleg) {};
   \begin{scope}[internal string scope]
     \node[sq tiny label] (f) at (pants.center) {$f$};
     \node[sq tiny label] (g) at (copants.center) {$g$};
     \draw (f.center) to (pants.belt);
     \draw[bend right] (f.center) to (pants.leftleg);
     \draw[bend left] (f.center) to (pants.rightleg);
     \draw (g.center) to (copants.belt);
     \draw[bend left] (g.center) to (copants.leftleg);
     \draw[bend right] (g.center) to (copants.rightleg);
   \end{scope}
  \end{tikzpicture}
\Rightarrow
  \begin{tikzpicture}[baseline={([yshift=-.5ex]current bounding box.center)}]
    \node[Cyl,xscale=1.2,top,anchor=bot] (tube) {};
    \node[Cyl,xscale=1.2,bot,anchor=top] (tube1) at (tube.bot) {};
    \begin{scope}[internal string scope]
     \node[sq tiny label] (f) at (tube.center) {$f$};
     \node[sq tiny label] (g) at (tube1.center) {$g$};
     \draw (tube1.bot) to (g.center);
     \draw (f.center) to (tube.top);
     \draw[bend left] (f.center) to (g.center);
     \draw[bend right] (f.center) to (g.center);
    \end{scope}
  \end{tikzpicture}\ ,
\hspace*{.5cm}
  \begin{tikzpicture}[baseline={([yshift=-.5ex]current bounding box.center)}]
    \node[Cyl,top,anchor=bot] (tube) {};
    \node[Cyl,bot,anchor=top] (tube1) at (tube.bot) {};
    \begin{scope}[internal string scope]
     \node[sq tiny label] (f) at (tube.bot) {$f$};
     \draw (f.center) to (tube1.bot);
     \draw (f.center) to (tube.top);
    \end{scope}
  \end{tikzpicture}\
  \begin{tikzpicture}[baseline={([yshift=-.5ex]current bounding box.center)}]
    \node[Cyl,top,anchor=bot] (tube) {};
    \node[Cyl,bot,anchor=top] (tube1) at (tube.bot) {};
    \begin{scope}[internal string scope]
     \node[sq tiny label] (g) at (tube.bot) {$g$};
     \draw (g.center) to (tube1.bot);
     \draw (g.center) to (tube.top);
    \end{scope}
  \end{tikzpicture}
\Rightarrow
  \begin{tikzpicture}[baseline={([yshift=-.5ex]current bounding box.center)}]
    \node[Pants,xscale=1.2,bot] (pants) {};
    \node[Copants,xscale=1.2, top, anchor=belt] (copants) at (pants.belt) {};
    \node[Bot3D,xscale=1.2] at (pants.belt) {};
    \begin{scope}[internal string scope]
    \node[sq tiny label] (f) at ($(pants.belt)+(-0.15,.15)$) {$f$};
     \node[sq tiny label] (g) at ($(pants.belt)+(0.15,.15)$) {$g$};
     \draw[in=-90, out=90, looseness=1.3]  (f.center) to ($(copants.leftleg)+(0,0)$);
     \draw[in=-90, out=90, looseness=1.3]  (g.center) to ($(copants.rightleg)+(0,0)$);
     \draw[in=90, out=-90, looseness=1.3]  (f.center) to ($(pants.leftleg)+(0,0)$);
     \draw[in=90, out=-90, looseness=1.3]  (g.center) to ($(pants.rightleg)+(0,0)$);
    \end{scope}
  \end{tikzpicture}
\ ,
\hspace*{.5cm}
  \begin{tikzpicture}[baseline={([yshift=-.5ex]current bounding box.center)}]
    \node[Cup, top,scale=1.2] (cup) at (0,1.5) {};
    \node[Cap, bot,scale=1.2] (cap) at (0,0) {};
    \begin{scope}[internal string scope]
    \node[sq tiny label] (f) at ($(cap.center)+(0,.2)$) {$g$};
    \node[sq tiny label] (g) at ($(cup.center)+(0,-.27)$)  {$f$};
     \draw (g.center) to (cup.center);
     \draw (f.center) to (cap.center);
    \end{scope}
  \end{tikzpicture}
\Rightarrow
  \begin{tikzpicture}[baseline={([yshift=-.5ex]current bounding box.center)}]
    \node[Cyl,xscale=1.2,top,anchor=bot] (tube) {};
    \node[Cyl,xscale=1.2,bot,anchor=top] (tube1) at (tube.bot) {};
    \begin{scope}[internal string scope]
     \node[sq tiny label] (f) at (tube.center) {$f$};
     \node[sq tiny label] (g) at (tube1.center) {$g$};
     \draw (tube1.bot) to (g.center);
     \draw (f.center) to (tube.top);
    \end{scope}
  \end{tikzpicture}\ ,
\hspace*{.5cm}
  \begin{tikzpicture}[baseline={([yshift=-.5ex]current bounding box.center)}]
	\begin{pgfonlayer}{nodelayer}
		\node [style=none] (0) at (-1, 1) {};
		\node [style=none] (1) at (-1, 0) {};
		\node [style=none] (2) at (0, 0) {};
		\node [style=none] (3) at (0, 1) {};
	\end{pgfonlayer}
	\begin{pgfonlayer}{edgelayer}
		\draw[style=dashed] (2.center) to (3.center);
		\draw[style=dashed] (3.center) to (0.center);
		\draw[style=dashed] (0.center) to (1.center);
		\draw[style=dashed] (1.center) to (2.center);
	\end{pgfonlayer}
\end{tikzpicture}
\Rightarrow
  \begin{tikzpicture}[baseline={([yshift=-.5ex]current bounding box.center)}]
    \node[Cup] (cup) {};
    \node[Cap, bot]  at (cup) {};
    \begin{scope}[internal string scope]
    \end{scope}
  \end{tikzpicture}
\end{equation}

There is much more to say about pointed profunctors, but we will omit the technical discussion and refer the interested reader to \cite{vicary_bordism}  and \cite{roman_coend} for a more in-depth discussion.

We are now in a position to give the definition of the category of optics.

\begin{defn}[Category of optics \cite{pastro_street,clarke_profunctor}]
  Given a symmetric monoidal category $\cat C$, the category of optics $\optic(\cat{C})$, has the same objects as $\comb(\cat{C})$.
  Morphisms are pairs $(f,g)_E$ like in $\comb(\cat{C})$ however, instead of quotienting the morphisms by the equivalence relation $\simcomb$, we quotient morphisms by the equivalence relation $\simopt$ imposed by embedding the combs inside the cobordisms:
  \begin{equation}\label{eq:optequiv}
    \begin{tikzpicture}[baseline={([yshift=-.5ex]current bounding box.center)}]
      \node[Pants, top] (pants) {};
      \node[Cyl, anchor=top] (tube) at (pants.leftleg){};
      \node[Copants, bot, anchor=leftleg] (copants) at (tube.bot) {};
      \node[Top3D] at (copants.rightleg) {};
      \node[Bot3D] at (pants.rightleg) {};
      \node[Bot3D] at (tube.center) {};
      \begin{scope}[internal string scope]
        \node[sq tiny label] (g) at (pants.center) {$g$};
        \node[sq tiny label] (f) at (copants.center) {$f$};
        \node[sq tiny label] (v) at (copants.leftleg) {$v$};
        \draw (f.center) to (copants.belt);
        \draw (g.center) to (pants.belt);
        \draw (f.center) to [out=165,in=-90] (copants.leftleg) to (tube.top) to [out=90,in=-165] (g.center);
        \draw (f.center) to [out=15,in=-90] (copants.rightleg);
        \draw (g.center) to [out=-15,in=90] (pants.rightleg);
      \end{scope}
    \end{tikzpicture}
    \quad
    \simopt
    \quad
    \begin{tikzpicture}[baseline={([yshift=-.5ex]current bounding box.center)}]
      \node[Pants, top] (pants) {};
      \node[Cyl, anchor=top] (tube) at (pants.leftleg){};
      \node[Copants, bot, anchor=leftleg] (copants) at (tube.bot) {};
      \node[Top3D] at (copants.rightleg) {};
      \node[Bot3D] at (pants.rightleg) {};
      \node[Bot3D] at (tube.center) {};
      \begin{scope}[internal string scope]
        \node[sq tiny label] (g) at (pants.center) {$g$};
        \node[sq tiny label] (f) at (copants.center) {$f$};
        \node[sq tiny label] (v) at (pants.leftleg) {$v$};
        \draw (f.center) to (copants.belt);
        \draw (g.center) to (pants.belt);
        \draw (f.center) to [out=165,in=-90] (copants.leftleg) to (tube.top) to [out=90,in=-165] (g.center);
        \draw (f.center) to [out=15,in=-90] (copants.rightleg);
        \draw (g.center) to [out=-15,in=90] (pants.rightleg);
      \end{scope}
    \end{tikzpicture}
  \end{equation}
  The string diagrams can be freely moved around the interior of the cobordism, but can not pass through the surface: as a result we are able to slide maps on the environment wire between the two halves with the equivalence relation generated by $( (v\otimes 1)f,g )_{E'} \sim (f,g(v\otimes 1))_E$.

  Composition, identities, and symmetric monoidal structure is as in $\comb(\cat{C})$.
  That $\simopt$ is a congruence and that the composite of two optics is another optic (i.e. that the composite of the comb-shaped cobordisms in \eqref{eq:optequiv} can be manipulated to give another comb-shaped cobordism) follows by a composition of the 2-cells in \eqref{eq:tube2cells}, see e.g. \cite{riley_optics} for more details.
\end{defn}

\subsection{Equivalence of the Definitions}\label{subsec:equivalence}
In this section we consider the question of when $\optic(\cat{C})$ and $\comb(\cat{C})$ are equivalent.
It is fairly straightforward to show that there is always a functor $\optic(\cat{C}) \morph{} \comb(\cat{C})$ turning the intensional combs into extensional combs.

\begin{prop}\label{prop:optic_comb}
  Given a symmetric monoidal category $\cat{C}$, there is a bijective on objects, full symmetric monoidal functor $\optic(\cat{C}) \morph{} \comb(\cat{C})$.
\end{prop}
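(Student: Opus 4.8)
The plan is to define the functor explicitly on representatives: it is the identity on objects (both categories share the same objects, namely pairs $(A,A')$), and it sends the $\simopt$-equivalence class of a representative $(f,g)_E$ to the $\simcomb$-equivalence class of the same representative. Everything then reduces to the single observation that $\simopt$ is a \emph{finer} equivalence relation than $\simcomb$, i.e. that $(f,g)_E \simopt (f',g')_{E'}$ implies $(f,g)_E \simcomb (f',g')_{E'}$.

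First I would establish this containment. Since $\simopt$ is generated by the sliding relation $((v\otimes 1)f, g)_{E'} \sim (f, g(v\otimes 1))_E$, it suffices to check that this single generating move already lies inside $\simcomb$. Unfolding the extensional equivalence \eqref{eq:combequiv}, one evaluates both sides on an arbitrary extended input and compares the resulting morphisms of $\cat{C}$: for the representative with environment $E'$ the composite is $g(1\otimes\lambda)(v\otimes 1)f$, and for the one with environment $E$ it is $g(v\otimes 1)(1\otimes\lambda)f$. These agree by the interchange law $(1\otimes\lambda)(v\otimes 1) = (v\otimes 1)(1\otimes\lambda)$, which simply commutes $v\colon E\to E'$ past the plugged-in input on the environment wire. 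I would render this with the comb / internal string-diagram pictures, but it is a direct bifunctoriality computation.

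With the containment in hand the functor is well defined, and functoriality is essentially free: composition, identities, and the symmetric monoidal structure on $\optic(\cat{C})$ are given by exactly the same formulas on representatives as in $\comb(\cat{C})$, so the assignment strictly preserves all of them. Bijectivity on objects holds by construction. For fullness, observe that any morphism of $\comb(\cat{C})$ is the class $[(f,g)_E]$ under $\simcomb$ of some pair, and this is precisely the image of the optic $[(f,g)_E]$ under $\simopt$; hence the functor is surjective on every hom-set.

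The one point to get right — though it is not deep — is the containment $\simopt \subseteq \simcomb$: one must confirm that the sliding relation is validated not merely by probing with a single map $\lambda$ but by the full family of extended inputs that define $\simcomb$, which the interchange argument above accommodates. The reverse containment fails in general, which is exactly why the functor is only full and not faithful; characterising when it becomes an isomorphism is the task taken up in the subsequent results.
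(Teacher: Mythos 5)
Your proposal is correct and takes essentially the same route as the paper: the paper's proof also sends a representative $(f,g)_E$ to its $\simcomb$-class and justifies well-definedness by noting that evaluation on each extended input $\lambda$ (popping the bubble) preserves the sliding of $v$ along the environment wire --- formally a cowedge factoring through the coend, which is exactly your interchange-law containment $\simopt\,\subseteq\,\simcomb$ checked on the generating move. The remaining verifications (fullness, bijectivity on objects, preservation of the monoidal structure) are immediate in both treatments for the same reason you give.
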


\begin{proof}
For each $\lambda$ there is a mapping:
  \begin{equation*}
   \begin{tikzpicture}[baseline={([yshift=-.5ex]current bounding box.center)}]
     \node[Pants, top] (pants) {};
     \node[Cyl, anchor=top] (tube) at (pants.leftleg){};
     \node[Copants, bot, anchor=leftleg] (copants) at (tube.bot) {};
     \node[Top3D] at (copants.rightleg) {};
     \node[Bot3D] at (pants.rightleg) {};
     \node[Bot3D] at (tube.center) {};
     \begin{scope}[internal string scope]
       \node[sq tiny label] (g) at (pants.center) {$g$};
       \node[sq tiny label] (f) at (copants.center) {$f$};
       \node[sq tiny label] (v) at (copants.leftleg) {$v$};
       \draw (f.center) to (copants.belt);
       \draw (g.center) to (pants.belt);
       \draw (f.center) to [out=165,in=-90] (copants.leftleg) to (tube.top) to [out=90,in=-165] (g.center);
       \draw (f.center) to [out=15,in=-90] (copants.rightleg);
       \draw (g.center) to [out=-15,in=90] (pants.rightleg);
     \end{scope}
   \end{tikzpicture}
   \ \mapsto \
   \tikzfigscale{0.9}{figs/bubble_pop}
  \end{equation*}
  This preserves the sliding of morphisms $v$ along the ancillary wire.
\end{proof}
\begin{remark}
  Formally, the mapping above gives a cowedge for $\cat{C}(A,\mathord{-}\otimes B)\times \cat{C}(\mathord{=} \otimes B', A')$ and must therefore factor uniquely via the coend.
\end{remark}

It is not immediately obvious whether the functor of the previous proposition is faithful and thus witnesses an equivalence of categories.
\begin{counterexample}
  Consider the free commutative monoidal category generated by one object $A$ and a single idempotent $f:A\morph{}A$.
  Then $(1_A,f)_I \nsim_{\text{opt}} (f,1_A)_I$ but $(1_A,f)_I\simcomb (f,1_A)_I$ and thus $\optic(\cat{C})\ncong\comb(\cat{C})$ in this case.
\end{counterexample}

We now explore some classes of categories where there is an equivalence $\optic(\cat{C})\cong\comb(\cat{C})$.

\begin{prop}\label{prop:compactclosed}
  Given a compact closed category $\cat{C}$, there is a symmetric monoidal isomorphism of categories $\optic(\cat{C}) \cong \comb(\cat{C})$.
\end{prop}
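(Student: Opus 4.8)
The functor $G\colon\optic(\cat{C})\to\comb(\cat{C})$ produced by Proposition~\ref{prop:optic_comb} is already bijective on objects, full and symmetric monoidal, so the plan is to reduce the whole statement to a single extra property: \emph{faithfulness}. Indeed a bijective-on-objects, fully faithful, symmetric monoidal functor is a symmetric monoidal isomorphism, its inverse inheriting the monoidal structure canonically. Concretely, faithfulness is exactly the converse of the implication underlying Proposition~\ref{prop:optic_comb}: for compact closed $\cat{C}$ I must show that $(f,g)_E\simcomb(f',g')_{E'}$ already forces $(f,g)_E\simopt(f',g')_{E'}$.

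The heart of the argument is a canonical-form lemma for optics in the compact closed setting. Writing $f\colon A\to E\otimes B$ and $g\colon E\otimes B'\to A'$, let $w\colon E\to A'\otimes B'^{*}$ be the name of $g$, obtained by bending the $B'$-leg to an output using the unit of the compact structure. The zig-zag (snake) identity gives $g=(1_{A'}\otimes\varepsilon_{B'})(w\otimes 1_{B'})$, so applying the generating move $((v\otimes 1)h,k)_{E'}\simopt(h,k(v\otimes 1))_{E}$ with $v=w$, $h=f$ and $k=1_{A'}\otimes\varepsilon_{B'}$ shows that $(f,g)_E$ is $\simopt$-equivalent to the canonical optic with residual $A'\otimes B'^{*}$, fixed right leg $1_{A'}\otimes\varepsilon_{B'}$, and left leg the single morphism $\phi:=(w\otimes 1_B)f\colon A\to A'\otimes B'^{*}\otimes B$. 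Thus every $\simopt$-class is represented by, and determined by, the morphism $\phi$; this is the diagrammatic shadow of the co-Yoneda collapse of the coend already flagged in the remark after Proposition~\ref{prop:optic_comb}.

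It then remains to match $\phi$ with the extensional data. By the proposition immediately preceding this statement, in a compact closed category $\simcomb$ coincides with $\simsigma$, and the bending of the interface wires realising $\simsigma$ produces precisely the morphism $\phi$ (up to the fixed cups and caps). Hence $(f,g)_E\simcomb(f',g')_{E'}$ entails $\phi=\phi'$, so the two optics share the same canonical form and are $\simopt$-equivalent. Faithfulness follows, and with it the symmetric monoidal isomorphism $\optic(\cat{C})\cong\comb(\cat{C})$.

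The main obstacle is the canonical-form lemma itself: one must verify that the generating slide really absorbs $w$ — this is exactly where the zig-zag identities are invoked — and that the resulting representative is well defined on $\simopt$-classes, independent of the chosen duals and coherence isomorphisms. A secondary, more bookkeeping step is to confirm that the bending witnessing $\simsigma$ yields the \emph{same} morphism $\phi$ as the optic canonical form, so that the two sides are genuinely compared through one and the same morphism; this is routine string-diagram manipulation but must be carried out carefully to ensure the correspondence is a bijection of hom-sets rather than merely a surjection.
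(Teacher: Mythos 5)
Your proposal is correct and takes essentially the same approach as the paper: the paper likewise reduces everything to faithfulness of the functor of Proposition~\ref{prop:optic_comb}, and its diagrammatic proof is exactly your canonical-form argument rendered in internal string diagrams --- compact closure bends the interface wire so that both legs merge into a single morphism inside one bubble, which is then identified across the two combs via $\simcomb \Rightarrow \simsigma$. The only cosmetic difference is that the paper slides $f$ up into $g$'s bubble whereas you slide the name of $g$ down to $f$, and the paper's follow-up remark on Yoneda reduction is precisely the abstract form of your canonical-form lemma.
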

\begin{proof}
  \begin{equation*}
    \begin{tikzpicture}[baseline={([yshift=-.5ex]current bounding box.center)}]
      \node[Pants, top] (pants) {};
      \node[Cyl, anchor=top] (tube) at (pants.leftleg){};
      \node[Copants, bot, anchor=leftleg] (copants) at (tube.bot) {};
      \node[Top3D] at (copants.rightleg) {};
      \node[Bot3D] at (pants.rightleg) {};
      \node[Bot3D] at (tube.center) {};
      \begin{scope}[internal string scope]
        \node[sq tiny label] (g) at (pants.center) {$g$};
        \node[sq tiny label] (f) at (copants.center) {$f$};
        \draw (f.center) to (copants.belt);
        \draw (g.center) to (pants.belt);
        \draw (f.center) to [out=165,in=-90] (copants.leftleg) to (tube.top) to [out=90,in=-165] (g.center);
        \draw (f.center) to [out=15,in=-90] (copants.rightleg);
        \draw (g.center) to [out=-15,in=90] (pants.rightleg);
      \end{scope}
    \end{tikzpicture}
    \ = \
    \begin{tikzpicture}[baseline={([yshift=-.5ex]current bounding box.center)}]
      \node[Pants, top] (pants) {};
      \node[Cyl, anchor=top] (tube) at (pants.leftleg){};
      \node[Copants, bot, height scale = 1.5, anchor=leftleg] (copants) at (tube.bot) {};
      \node[Top3D] at (copants.rightleg) {};
      \node[Bot3D] at (pants.rightleg) {};
      \node[Bot3D] at (tube.center) {};
      \begin{scope}[internal string scope]
        \node[sq tiny label] (g) at (pants.center) {$g$};
        \node[sq tiny label] (f) at (copants.center) [left=.05\cobwidth] {$f$};
        \draw (f.center) to [in=90,out=-90] (copants.belt);
        \draw (g.center) to (pants.belt);
        \draw (f.center) to [out=165,in=-90] (copants.leftleg) to (tube.top) to [out=90,in=-165] (g.center);
        \draw (f.center) to [out=45,in=-90] ++(0.19,0.15) to ++(0,0.1) to [out=90,in=180] ++(0.09,0.1) to [out=0,in=90] ++(0.09,-0.1) to ++(0,-0.1) to [out=-90,in=180] ++(0.09,-0.1) to [out=0,in=-90] ++(0.09,0.1) to (copants.rightleg);
        \draw (g.center) to [out=-15,in=90] (pants.rightleg);
      \end{scope}
    \end{tikzpicture}
    \ \simopt \
    \begin{tikzpicture}[baseline={([yshift=-.5ex]current bounding box.center)}]
      \node[Pants, top, left leg scale = 1.5,height scale =1.2] (pants) {};
      \node[Cyl, top scale=1.5, bottom scale=1.5, anchor=top] (tube) at (pants.leftleg){};
      \node[Copants, left leg scale=1.5, anchor=leftleg] (copants) at (tube.bot) {};
      \node[Top3D] at (copants.rightleg) {};
      \node[Bot3D] at (pants.rightleg) {};
      \node[Bot3D] at (copants.belt) {};
      \node[Bot3D,scale=1.2] at (tube.center) {};
      \begin{scope}[internal string scope]
        \node[sq tiny label] (g) at (pants.center) [above] {$g$};
        \node[sq tiny label] (f) at (pants.leftleg) [left=.05\cobwidth] {$f$};
        \node (a) at (copants.leftleg) [left=.05\cobwidth] {};
        \draw (f.center) to (a.center) to [in=90,out=-90] (copants.belt);
        \draw (g.center) to (pants.belt);
        \draw (f.center) to [out=90,in=-165] (g.center);
        \draw (f.center) to [out=45,in=-90] ++(0.19,0.15) to ++(0,0.1) to [out=90,in=180] ++(0.09,0.1) to [out=0,in=90] ++(0.09,-0.1) to ++(0,-1.25) to [out=-90,in=180] ++(0.3,-0.2) to [out=0,in=-90] ++(0.3,0.2) to [out=90,in=-90] (copants.rightleg);
        \draw (g.center) to [out=-15,in=90] (pants.rightleg);
      \end{scope}
    \end{tikzpicture}
    \ = \
    \begin{tikzpicture}[baseline={([yshift=-.5ex]current bounding box.center)}]
      \node[Pants, top, left leg scale = 1.5,height scale =1.2] (pants) {};
      \node[Cyl, top scale=1.5, bottom scale=1.5, anchor=top] (tube) at (pants.leftleg){};
      \node[Copants, bot, left leg scale=1.5, anchor=leftleg] (copants) at (tube.bot) {};
      \node[Top3D] at (copants.rightleg) {};
      \node[Bot3D] at (pants.rightleg) {};
      \node[Bot3D,scale=1.2] at (tube.center) {};
      \begin{scope}[internal string scope]
        \node[sq tiny label] (g) at (pants.center) [above] {$g'$};
        \node[sq tiny label] (f) at (pants.leftleg) [left=.00\cobwidth] {$f'$};
        \node (a) at (tube.bot) [left=.05\cobwidth] {};
        \draw (f.center) to (a.center) to [in=90,out=-90] (copants.belt);
        \draw (g.center) to (pants.belt);
        \draw (f.center) to [out=90,in=-165] (g.center);
        \draw (f.center) to [out=45,in=-90] ++(0.19,0.15) to ++(0,0.1) to [out=90,in=180] ++(0.09,0.1) to [out=0,in=90] ++(0.09,-0.1) to ++(0,-1.25) to [out=-90,in=180] ++(0.3,-0.2) to [out=0,in=-90] ++(0.3,0.2) to [out=90,in=-90] (copants.rightleg);
        \draw (g.center) to [out=-15,in=90] (pants.rightleg);
      \end{scope}
    \end{tikzpicture}
    \ \simopt \
    \begin{tikzpicture}[baseline={([yshift=-.5ex]current bounding box.center)}]
      \node[Pants, top] (pants) {};
      \node[Cyl, anchor=top] (tube) at (pants.leftleg){};
      \node[Copants, bot, height scale = 1.5, anchor=leftleg] (copants) at (tube.bot) {};
      \node[Top3D] at (copants.rightleg) {};
      \node[Bot3D] at (pants.rightleg) {};
      \node[Bot3D] at (tube.center) {};
      \begin{scope}[internal string scope]
        \node[sq tiny label] (g) at (pants.center) {$g'$};
        \node[sq tiny label] (f) at (copants.center) [left=.05\cobwidth] {$f'$};
        \draw (f.center) to [in=90,out=-90] (copants.belt);
        \draw (g.center) to (pants.belt);
        \draw (f.center) to [out=165,in=-90] (copants.leftleg) to (tube.top) to [out=90,in=-165] (g.center);
        \draw (f.center) to [out=45,in=-90] ++(0.19,0.15) to ++(0,0.1) to [out=90,in=180] ++(0.09,0.1) to [out=0,in=90] ++(0.09,-0.1) to ++(0,-0.1) to [out=-90,in=180] ++(0.09,-0.1) to [out=0,in=-90] ++(0.09,0.1) to (copants.rightleg);
        \draw (g.center) to [out=-15,in=90] (pants.rightleg);
      \end{scope}
    \end{tikzpicture}
    \ = \
    \begin{tikzpicture}[baseline={([yshift=-.5ex]current bounding box.center)}]
      \node[Pants, top] (pants) {};
      \node[Cyl, anchor=top] (tube) at (pants.leftleg){};
      \node[Copants, bot, anchor=leftleg] (copants) at (tube.bot) {};
      \node[Top3D] at (copants.rightleg) {};
      \node[Bot3D] at (pants.rightleg) {};
      \node[Bot3D] at (tube.center) {};
      \begin{scope}[internal string scope]
        \node[sq tiny label] (g) at (pants.center) {$g'$};
        \node[sq tiny label] (f) at (copants.center) {$f'$};
        \draw (f.center) to (copants.belt);
        \draw (g.center) to (pants.belt);
        \draw (f.center) to [out=165,in=-90] (copants.leftleg) to (tube.top) to [out=90,in=-165] (g.center);
        \draw (f.center) to [out=15,in=-90] (copants.rightleg);
        \draw (g.center) to [out=-15,in=90] (pants.rightleg);
      \end{scope}
    \end{tikzpicture}
  \end{equation*}

  So we have established that comb equivalence implies optic equivalence.
  This is sufficient to show that the functor of proposition \ref{prop:optic_comb} is also faithful.
\end{proof}
\begin{remark}
  The previous result could also be established by Yoneda reduction (see e.g.\ \cite[Sec.\ 4.2]{roman_comb}) as follows:
  \begin{align*}
    \int^E \cat{C}(A,E\otimes B)\times\cat{C}(E\otimes B',A') \cong \int^E \cat{C}(A,E\otimes B)\times\cat{C}(E,{B'}^*\otimes A') & \cong \cat{C}(A,{B'}^*\otimes A'\otimes B) \\
    & \cong \cat{C}(A\otimes B',A'\otimes B)
  \end{align*}
  Note that $(f,g)_E\simcomb (f',g')_{E'}$ implies $(f,g)_E\simsigma(f',g')_{E'}$ which ensures they are the same element of the set $\cat{C}(A\otimes B',A'\otimes B)$.
\end{remark}

\begin{prop}
 Given a Cartesian category $\cat{C}$ where each type is inhabited, there is a symmetric monoidal isomorphism of categories $\optic(\cat{C}) \cong \comb(\cat{C})$.
\end{prop}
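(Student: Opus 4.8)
The plan is to follow the strategy of Proposition~\ref{prop:compactclosed}: the functor $\optic(\cat{C})\morph{}\comb(\cat{C})$ of Proposition~\ref{prop:optic_comb} is already full, bijective on objects and symmetric monoidal, so it suffices to prove it is faithful, i.e.\ that $\simcomb$ implies $\simopt$. I would first pin down the $\simopt$-classes using the Cartesian structure. Writing a morphism $f\colon A\to E\times B$ as a pair $\langle f_E,f_B\rangle$ with $f_E\colon A\to E$ and $f_B\colon A\to B$, the factorisation $f=(f_E\times 1_B)\circ\langle 1_A,f_B\rangle$ lets me slide $f_E$ onto the other leg along the environment wire, so that
\begin{equation*}
 (f,g)_E \;\simopt\; \big(\langle 1_A,f_B\rangle,\; g\circ(f_E\times 1_{B'})\big)_A .
\end{equation*}
Hence every optic has a normal form with environment $A$, and its $\simopt$-class is determined by the pair $(f_B,\,g\circ(f_E\times 1_{B'}))\in\cat{C}(A,B)\times\cat{C}(A\times B',A')$ (equivalently, by co-Yoneda, $\int^E\cat{C}(A,E\times B)\times\cat{C}(E\times B',A')\cong\cat{C}(A,B)\times\cat{C}(A\times B',A')$); I shall call these the \emph{get} and the \emph{put}.

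Next I would show that $(f,g)_E\simcomb(f',g')_{E'}$ forces equality of the gets and of the puts, which by the previous paragraph yields $\simopt$-equivalence. Both facts come from evaluating the extensional relation \eqref{eq:combequiv} on suitable extended inputs $\lambda\colon X\times B\to X\times B'$, comparing the two extended combs as morphisms $X\times A\to X\times A'$. For the put, take the spectator $X=B'$ and $\lambda=\langle\pi_1,\pi_1\rangle$, copying the spectator into both outputs and discarding the hole input; tracing through the wiring, the $A'$-component of the extended comb is $g\circ(f_E\times 1_{B'})$ (up to the symmetry $B'\times A\cong A\times B'$), so the puts agree. For the get, take $X=B$ and $\lambda=\langle\pi_2,\mu\rangle$, routing the hole input back to the spectator output, where $\mu\colon B\times B\to B'$ can be taken to be $\beta\circ{\excl}$ for any state $\beta\colon I\to B'$, which exists precisely because $B'$ is inhabited. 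The spectator output of this extended comb is then exactly $f_B$, independently of $g$ and $\mu$, so the gets agree.

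Combining the two paragraphs, $\simcomb$ implies equality of get and put, hence $\simopt$; thus the functor is faithful and, being full and bijective on objects, is an isomorphism, which is symmetric monoidal by Proposition~\ref{prop:optic_comb}, the monoidal and symmetry clauses transporting across it for free. I expect the main obstacle to be bookkeeping: verifying that the two chosen probes really isolate the get and the put requires carefully tracking the symmetry isomorphisms that shuttle the spectator wire $X$ past the environment $E$, and one must check that the hypothesis ``each type is inhabited'' is used only where genuinely needed---namely to supply the junk component $\mu$ of the get-probe, and, since the probe only yields $f_B\circ\pi_A=f'_B\circ\pi_A$, to split the projection $\pi_A$ via a section $\langle\beta\circ{\excl},1_A\rangle$. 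The coarser relations $\simtau$ and $\simsigma$ fail exactly because they do not let the spectator wire $X$ carry the hole input to the output, which is the crux of the get recovery.
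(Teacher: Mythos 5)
Your proposal is correct and takes essentially the same route as the paper's proof: both arguments show that $\simcomb$ forces equality of the two Cartesian components of the comb (the ``get'' $f_B$ and the ``put'' $g\circ(f_E\times 1_{B'})$), using inhabitedness to cancel the leftover projection $\pi_A$, and both then use the Cartesian copy/pairing structure to slide every optic into a normal form with environment $A$ determined by exactly these two components, so that their equality yields $\simopt$ and hence faithfulness of the quotient functor. The only cosmetic difference is that the paper extracts both components from a single probe (the braid $\lambda=\sigma$) and phrases the normalisation with $\Delta$ and $\excl$ inside the internal string diagrams, whereas you use two tailor-made probes and algebraic pairing notation.
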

\begin{proof}
  Suppose $(f,g)_E\simcomb (f',g')_{E'}$.
  We know that these combs are equal on the braid:
  \begin{equation*}
    \tikzfigscale{0.9}{figs/comb_braid2}
  \end{equation*}
  By the universal property of the product, this map is completely determined by its projections into $A'$ and $B$.
  The former gives:
  \begin{equation}\label{eq:cart1}
    \tikzfigscale{0.9}{figs/cart1} \in \cat{C}(A\times B',A')
  \end{equation}
  while the latter gives
  \begin{equation*}
    \tikzfigscale{0.9}{figs/cart2}
  \end{equation*}
Pick a map $\phi:I\to A$, then
  \begin{equation}\label{eq:cart3}
    \tikzfigscale{0.9}{figs/cart3}
  \end{equation}

  Thus:
  \begin{equation*}
    \begin{tikzpicture}[baseline={([yshift=-.5ex]current bounding box.center)}]
      \node[Pants, top] (pants) {};
      \node[Cyl, anchor=top] (tube) at (pants.leftleg){};
      \node[Copants, bot, anchor=leftleg] (copants) at (tube.bot) {};
      \node[Top3D] at (copants.rightleg) {};
      \node[Bot3D] at (pants.rightleg) {};
      \node[Bot3D] at (tube.center) {};
      \begin{scope}[internal string scope]
        \node[sq tiny label] (g) at (pants.center) {$g$};
        \node[sq tiny label] (f) at (copants.center) {$f$};
        \draw (f.center) to (copants.belt);
        \draw (g.center) to (pants.belt);
        \draw (f.center) to [out=165,in=-90] (copants.leftleg) to (tube.top) to [out=90,in=-165] (g.center);
        \draw (f.center) to [out=15,in=-90] (copants.rightleg);
        \draw (g.center) to [out=-15,in=90] (pants.rightleg);
      \end{scope}
    \end{tikzpicture}
    \ = \
    \begin{tikzpicture}[baseline={([yshift=-.5ex]current bounding box.center)}]
      \node[Pants, top, scale=1.5] (pants) {};
      \node[Cyl, anchor=top, scale=1.5] (tube) at (pants.leftleg){};
      \node[Copants, bot, anchor=leftleg, scale=1.5] (copants) at (tube.bot) {};
      \node[Top3D, scale=1.5] at (copants.rightleg) {};
      \node[Bot3D, scale=1.5] at (pants.rightleg) {};
      \node[Bot3D, scale=1.5] at (tube.center) {};
      \begin{scope}[internal string scope]
        \node[sq tiny label] (g) at (pants.center) {$g$};
        \node[sq tiny label] (delta) at ($(copants.center)+(0,-.35)$) {$\Delta$};
        \node[sq tiny label] (f0) at ($(delta.center)+(-.3,.25)$) {$f$};
        \node[sq tiny label] (f1) at ($(delta.center)+(.3,.25)$) {$f$};
        \node[sq tiny label] (pi0) at ($(f0.center)+(0,.35)$) {$!$};
        \node[sq tiny label] (pi1) at ($(f1.center)+(0,.35)$) {$!$};
        \draw (g.center) to (pants.belt);
        \draw (g.center) to [out=-15,in=90] (pants.rightleg);
        \draw (copants.belt) to (delta.center);
        \draw[bend right] (delta.center) to (f1.center);
        \draw[bend left] (delta.center) to (f0.center);
        \draw (f0.center) to (pi0.center);
        \draw (f1.center) to (pi1.center);
        \draw (f0.center) to [out=165,in=-90] (tube.bot) to (tube.top) to [out=90,in=-165] (g.center);
        \draw (f1.center) to [out=15,in=-90] (copants.rightleg);
      \end{scope}
    \end{tikzpicture}
    \ \sim \
    \begin{tikzpicture}[baseline={([yshift=-.5ex]current bounding box.center)}]
      \node[Pants, top, scale=1.5] (pants) {};
      \node[Cyl, anchor=top, scale=1.5] (tube) at (pants.leftleg){};
      \node[Copants, bot, anchor=leftleg, scale=1.5] (copants) at (tube.bot) {};
      \node[Top3D, scale=1.5] at (copants.rightleg) {};
      \node[Bot3D, scale=1.5] at (pants.rightleg) {};
      \node[Bot3D, scale=1.5] at (tube.center) {};
      \begin{scope}[internal string scope]
        \node[sq tiny label] (g) at ($(pants.center)+(-.1,.15)$) {$g$};
        \node[sq tiny label] (delta) at ($(copants.center)+(0,-.35)$) {$\Delta$};
        \node[sq tiny label] (f0) at ($(g.center)+(-.4,-.4)$) {$f$};
        \node[sq tiny label] (f1) at ($(delta.center)+(.3,.25)$) {$f$};
        \node[sq tiny label] (pi0) at ($(g.center)+(.38,.1)$) {$!$};
        \node[sq tiny label] (pi1) at ($(f1.center)+(0,.35)$) {$!$};
        \draw (g.center) to ($(pants.belt)+(-.1,0)$);
        \draw (g.center) to [out=-15,in=90] (pants.rightleg);
        \draw (copants.belt) to (delta.center);
        \draw[bend right] (delta.center) to (f1.center);
        \draw (delta.center) to [out=165,in=-90] ($(tube.bot)+(.1,0)$) to ($(tube.top)+(.1,0)$) to [out=90,in=-90] (f0.center);
        \draw[bend right] (f0.center) to (pi0.center);
        \draw (f1.center) to (pi1.center);
        \draw[bend left]  (f0.center) to  (g.center);
        \draw (f1.center) to [out=15,in=-90] (copants.rightleg);
      \end{scope}
    \end{tikzpicture}
    \ = \
    \begin{tikzpicture}[baseline={([yshift=-.5ex]current bounding box.center)}]
      \node[Pants, top, scale=1.5] (pants) {};
      \node[Cyl, anchor=top, scale=1.5] (tube) at (pants.leftleg){};
      \node[Copants, bot, anchor=leftleg, scale=1.5] (copants) at (tube.bot) {};
      \node[Top3D, scale=1.5] at (copants.rightleg) {};
      \node[Bot3D, scale=1.5] at (pants.rightleg) {};
      \node[Bot3D, scale=1.5] at (tube.center) {};
      \begin{scope}[internal string scope]
        \node[sq tiny label] (g) at ($(pants.center)+(-.125,.15)$) {$g'$};
        \node[sq tiny label] (delta) at ($(copants.center)+(0,-.35)$) {$\Delta$};
        \node[sq tiny label] (f0) at ($(g.center)+(-.38,-.4)$) {$f'$};
        \node[sq tiny label] (f1) at ($(delta.center)+(.3,.25)$) {$f'$};
        \node[sq tiny label] (pi0) at ($(g.center)+(.38,.1)$) {$!$};
        \node[sq tiny label] (pi1) at ($(f1.center)+(0,.35)$) {$!$};
        \draw (g.center) to ($(pants.belt)+(-.125,0)$);
        \draw (g.center) to [out=-15,in=90] (pants.rightleg);
        \draw (copants.belt) to (delta.center);
        \draw[bend right] (delta.center) to (f1.center);
        \draw (delta.center) to [out=165,in=-90] ($(tube.bot)+(.1,0)$) to ($(tube.top)+(.1,0)$) to [out=90,in=-90] (f0.center);
        \draw[bend right] (f0.center) to (pi0.center);
        \draw (f1.center) to (pi1.center);
        \draw[bend left]  (f0.center) to  (g.center);
        \draw (f1.center) to [out=15,in=-90] (copants.rightleg);
      \end{scope}
    \end{tikzpicture}
    \ \sim \
    \begin{tikzpicture}[baseline={([yshift=-.5ex]current bounding box.center)}]
      \node[Pants, top, scale=1.5] (pants) {};
      \node[Cyl, anchor=top, scale=1.5] (tube) at (pants.leftleg){};
      \node[Copants, bot, anchor=leftleg, scale=1.5] (copants) at (tube.bot) {};
      \node[Top3D, scale=1.5] at (copants.rightleg) {};
      \node[Bot3D, scale=1.5] at (pants.rightleg) {};
      \node[Bot3D, scale=1.5] at (tube.center) {};
      \begin{scope}[internal string scope]
        \node[sq tiny label] (g) at (pants.center) {$g'$};
        \node[sq tiny label] (delta) at ($(copants.center)+(0,-.35)$) {$\Delta$};
        \node[sq tiny label] (f0) at ($(delta.center)+(-.3,.25)$) {$f'$};
        \node[sq tiny label] (f1) at ($(delta.center)+(.3,.25)$) {$f'$};
        \node[sq tiny label] (pi0) at ($(f0.center)+(0,.35)$) {$!$};
        \node[sq tiny label] (pi1) at ($(f1.center)+(0,.35)$) {$!$};
        \draw (g.center) to (pants.belt);
        \draw (g.center) to [out=-15,in=90] (pants.rightleg);
        \draw (copants.belt) to (delta.center);
        \draw[bend right] (delta.center) to (f1.center);
        \draw[bend left] (delta.center) to (f0.center);
        \draw (f0.center) to (pi0.center);
        \draw (f1.center) to (pi1.center);
        \draw (f0.center) to [out=165,in=-90] (tube.bot) to (tube.top) to [out=90,in=-165] (g.center);
        \draw (f1.center) to [out=15,in=-90] (copants.rightleg);
      \end{scope}
    \end{tikzpicture}
    =
    \begin{tikzpicture}[baseline={([yshift=-.5ex]current bounding box.center)}]
      \node[Pants, top] (pants) {};
      \node[Cyl, anchor=top] (tube) at (pants.leftleg){};
      \node[Copants, bot, anchor=leftleg] (copants) at (tube.bot) {};
      \node[Top3D] at (copants.rightleg) {};
      \node[Bot3D] at (pants.rightleg) {};
      \node[Bot3D] at (tube.center) {};
      \begin{scope}[internal string scope]
        \node[sq tiny label] (g) at (pants.center) {$g'$};
        \node[sq tiny label] (f) at (copants.center) {$f'$};
        \draw (f.center) to (copants.belt);
        \draw (g.center) to (pants.belt);
        \draw (f.center) to [out=165,in=-90] (copants.leftleg) to (tube.top) to [out=90,in=-165] (g.center);
        \draw (f.center) to [out=15,in=-90] (copants.rightleg);
        \draw (g.center) to [out=-15,in=90] (pants.rightleg);
      \end{scope}
    \end{tikzpicture}
  \end{equation*}
\end{proof}
\begin{remark}
  The final part of the proof can also be derived by Yoneda reduction (see e.g.\ \cite[Sec.\ 3.1]{clarke_profunctor}):
    \begin{align*}
      \int^E \cat{C}(A,E\times B)\times\cat{C}(E\times B',A') & \cong \int^E\cat{C}(A,E)\times\cat{C}(A,B)\times\cat{C}(E\times B',A')
       \cong \cat{C}(A,B)\times \cat{C}(A\times B',A')
    \end{align*}
  and then noting that the projections \eqref{eq:cart1} and \eqref{eq:cart3} precisely determine an element of $\cat{C}(A,B)\times \cat{C}(A\times B',A')$.
\end{remark}

\begin{prop}
There is a symmetric monoidal isomorphism $\optic(\sf{Unitary}) \cong \comb(\sf{Unitary})$, where $\sf{Unitary}$ is the category of unitary maps between (not necessarily finite dimensional) Hilbert spaces.
\end{prop}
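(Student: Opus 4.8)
The functor of Proposition~\ref{prop:optic_comb} is already bijective on objects, full and symmetric monoidal, so it suffices to prove that it is \emph{faithful}, i.e.\ that $(f,g)_E \simcomb (f',g')_{E'}$ implies $(f,g)_E \simopt (f',g')_{E'}$ whenever all four maps are unitary. Writing $f\colon A\to E\otimes B$, $g\colon E\otimes B'\to A'$, $f'\colon A\to E'\otimes B$ and $g'\colon E'\otimes B'\to A'$, the plan is to manufacture a single \emph{unitary} $v\colon E\to E'$ with $f'=(v\otimes 1_B)f$ and $g=g'(v\otimes 1_{B'})$; by the explicit sliding relation $((v\otimes1)f,g')_{E'}\sim(f,g'(v\otimes1))_E$ this exhibits $(f,g)_E\simopt(f',g')_{E'}$ in a single step.

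Since $\simcomb$ subsumes $\simsigma$, the hypothesis gives in particular that the two combs agree after evaluation on the symmetry, i.e.\ they determine the same element $T\in\cat{C}(A\otimes B',A'\otimes B)$, with the two expressions $T=(g\otimes 1_B)(1_E\otimes\sigma)(f\otimes 1_{B'})=(g'\otimes 1_B)(1_{E'}\otimes\sigma)(f'\otimes 1_{B'})$. Crucially, in $\sf{Unitary}$ every factor here is unitary, so $T$ is unitary and, more importantly, $f,g,f',g'$ are all invertible with inverses their adjoints. Set $u:=g'^{\dagger}g\colon E\otimes B'\to E'\otimes B'$ and define the unitary $V\colon E\otimes B\otimes B'\to E'\otimes B\otimes B'$ by conjugating $u\otimes 1_B$ with the symmetries that interchange the $B$ and $B'$ legs, so that by construction $V$ threads the middle $B$ factor through untouched.

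Two tensor-factorisation facts about $V$ then pin it down. First, cancelling $g'$ from the two expressions for $T$ using $g'^{\dagger}g'=1$ yields $V(f\otimes 1_{B'})=f'\otimes 1_{B'}$; since $f$ is unitary the map $f\otimes 1_{B'}$ is unitary, and composing on the right with its adjoint gives $V=(f'f^{\dagger})\otimes 1_{B'}$, so $V$ acts trivially on the $B'$ leg. Second, by its very definition $V$ intertwines the $\mathcal{B}(H_B)$-action on the middle leg. An operator that is both of the form $X\otimes 1_{B'}$ and intertwines the action of $1\otimes\mathcal{B}(H_B)\otimes 1$ must equal $v\otimes 1_B\otimes 1_{B'}$ for a unique $v\colon E\to E'$, which is unitary because $V$ is and $B,B'\neq 0$. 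Unwinding, $V=v\otimes1_B\otimes1_{B'}$ gives $f'=(v\otimes1_B)f$ directly, and substituting this into the primed expression for $T$ and cancelling the unitary (hence epic) map $(1_E\otimes\sigma)(f\otimes1_{B'})$ from the comparison with the unprimed expression yields $g=g'(v\otimes1_{B'})$. This is exactly the data of a sliding step, completing the proof.

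The main obstacle is the factorisation lemma of the previous paragraph: because the Hilbert spaces may be infinite dimensional one cannot argue by counting dimensions (indeed $\dim E=\dim E'$ can fail when $B$ is infinite dimensional), so this step must be carried out by an intertwiner argument, e.g.\ evaluating against the rank-one matrix units $|i\rangle\!\langle j|$ of $\mathcal{B}(H_B)$ and $\mathcal{B}(H_{B'})$ to read off the block structure, or equivalently via the standard description of the relative commutant $\mathcal{B}(H_E)\otimes 1_B\otimes1_{B'}$ inside a tripartite tensor product. It is worth emphasising that unitarity is used essentially, and in two distinct places --- in the invertibility of $g'$ (to define $u$ and to cancel it) and in the unitarity of $f\otimes1_{B'}$ (to deduce $V=(f'f^{\dagger})\otimes1_{B'}$). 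Dropping it breaks the argument, as the idempotent counterexample preceding Proposition~\ref{prop:compactclosed} shows that $\simsigma$ alone does not imply $\simopt$ in general.
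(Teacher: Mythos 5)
Your proof is correct and follows essentially the same route as the paper's: evaluate the comb equivalence on the symmetry, use unitarity to form the transition maps $f'f^{\dagger}$ and $g'^{\dagger}g$, show they $\otimes$-separate by an argument carried out in the ambient category of bounded operators (the paper does this by compressing with a state/effect pair $\bra{e}$, $\ket{\psi}$ under the embedding into $\mathsf{Hilb}$, which is exactly the matrix-unit variant of your commutant argument), and finish by sliding along the environment wire. The only differences are organizational: the paper produces two unitaries $U',V'$ with $V'U'=1$ and cancels them by sliding, whereas you extract a single unitary $v$ satisfying both relations $f'=(v\otimes 1)f$ and $g=g'(v\otimes 1)$ at once.
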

\begin{proof}
  $f:A\morph{}E\otimes B$ is a unitary and thus $A\cong E\otimes B$ are isomorphic as Hilbert spaces.
  Similarly from $f'$ we see $A\cong E'\otimes B$ and from $g$ and $g'$, $A'\cong E\otimes B' \cong E'\otimes B'$.
  This means there must exist a unitary $U:E\otimes B\morph{} E'\otimes B$ such that $f' = Uf$ and a unitary $V:E'\otimes B'\morph{} E\otimes B'$ such that $g' = gV$.

  Using the fact that $(f,g)_E\simcomb (f',g')_{E'}$ and that $f$ and $g$ have two-sided inverses, we see that for all $\lambda$:
  \begin{equation}\label{eq:unitaries}
    \tikzfigscale{0.9}{figs/unitaries1}
  \end{equation}
  Taking $\lambda = \sigma$ we arrive at the following equality:
  \begin{equation*}
    \tikzfigscale{0.9}{figs/unitaries2}
  \end{equation*}
  There exists a faithful embedding of $\mathsf{Unitaries}$ into $\mathsf{Hilb}$ where we can pick any state $\ket{\psi}$ and effect $\bra{e}$ with $\bra{e}\ket{\psi}= 1$ to see that:
  \begin{equation*}
    \tikzfigscale{0.9}{figs/unitaries3}
  \end{equation*}
  As a result $U$ can be seen to $\otimes$-separate as $U=U'\otimes 1$ where $U':=(1\otimes e)V^{-1}(1\otimes \psi)$ must be a unitary else $U$ could not be unitary and we would have a contradiction.
  Analagously one can show that $V$ $\otimes$-separates as $V'\otimes 1$.
  Inserting these factorisations into the right hand side of \eqref{eq:unitaries} one can see that $V'U' = 1$.

  Therefore:
  \begin{equation*}
    \begin{tikzpicture}[baseline={([yshift=-.5ex]current bounding box.center)}]
      \node[Pants, top] (pants) {};
      \node[Cyl, anchor=top] (tube) at (pants.leftleg){};
      \node[Copants, bot, lowercob, anchor=leftleg] (copants) at (tube.bot) {};
      \node[Top3D] at (copants.rightleg) {};
      \node[Bot3D] at (pants.rightleg) {};
      \node[Bot3D] at (tube.center) {};
      \begin{scope}[internal string scope]
        \node[sq tiny label] (g) at (pants.center) {$g'$};
        \node[sq tiny label] (f) at (copants.center) {$f'$};
        \draw (f.center) to (copants.belt);
        \draw (g.center) to (pants.belt);
        \draw (f.center) to [out=165,in=-90] (copants.leftleg) to (tube.top) to [out=90,in=-165] (g.center);
        \draw (f.center) to [out=15,in=-90] (copants.rightleg);
        \draw (g.center) to [out=-15,in=90] (pants.rightleg);
      \end{scope}
    \end{tikzpicture}
    \ = \
    \begin{tikzpicture}[baseline={([yshift=-.5ex]current bounding box.center)}]
      \node[Pants, top] (pants) {};
      \node[Cyl, anchor=top] (tube) at (pants.leftleg){};
      \node[Copants, bot, lowercob, anchor=leftleg] (copants) at (tube.bot) {};
      \node[Top3D] at (copants.rightleg) {};
      \node[Bot3D] at (pants.rightleg) {};
      \node[Bot3D] at (tube.center) {};
      \begin{scope}[internal string scope]
        \node[sq tiny label] (g) at (pants.center) {$g$};
        \node[sq tiny label] (f) at (copants.center) {$f$};
        \node[sq tiny label] (u) at (copants.leftleg) {$U'$};
        \node[sq tiny label] (v) at (pants.leftleg) {$V'$};
        \draw (f.center) to (copants.belt);
        \draw (g.center) to (pants.belt);
        \draw (f.center) to [out=165,in=-90] (copants.leftleg) to (tube.top) to [out=90,in=-165] (g.center);
        \draw (f.center) to [out=15,in=-90] (copants.rightleg);
        \draw (g.center) to [out=-15,in=90] (pants.rightleg);
      \end{scope}
    \end{tikzpicture}
    \ \simopt \
    \begin{tikzpicture}[baseline={([yshift=-.5ex]current bounding box.center)}]
      \node[Pants, top] (pants) {};
      \node[Cyl, anchor=top] (tube) at (pants.leftleg){};
      \node[Copants, bot, lowercob, anchor=leftleg] (copants) at (tube.bot) {};
      \node[Top3D] at (copants.rightleg) {};
      \node[Bot3D] at (pants.rightleg) {};
      \node[Bot3D,yshift=-0.1cm] at (tube.center) {};
      \begin{scope}[internal string scope]
        \node[sq tiny label] (g) at (pants.center) {$g$};
        \node[sq tiny label] (f) at (copants.center) {$f$};
        \node[sq tiny label,yshift=0.5cm] (u) at (copants.leftleg) {$U'$};
        \node[sq tiny label,yshift=0.1cm] (v) at (pants.leftleg) {$V'$};
        \draw (f.center) to (copants.belt);
        \draw (g.center) to (pants.belt);
        \draw (f.center) to [out=165,in=-90] (copants.leftleg) to (tube.top) to [out=90,in=-165] (g.center);
        \draw (f.center) to [out=15,in=-90] (copants.rightleg);
        \draw (g.center) to [out=-15,in=90] (pants.rightleg);
      \end{scope}
    \end{tikzpicture}
    \ = \
    \begin{tikzpicture}[baseline={([yshift=-.5ex]current bounding box.center)}]
      \node[Pants, top] (pants) {};
      \node[Cyl, anchor=top] (tube) at (pants.leftleg){};
      \node[Copants, bot, lowercob, anchor=leftleg] (copants) at (tube.bot) {};
      \node[Top3D] at (copants.rightleg) {};
      \node[Bot3D] at (pants.rightleg) {};
      \node[Bot3D] at (tube.center) {};
      \begin{scope}[internal string scope]
        \node[sq tiny label] (g) at (pants.center) {$g$};
        \node[sq tiny label] (f) at (copants.center) {$f$};
        \draw (f.center) to (copants.belt);
        \draw (g.center) to (pants.belt);
        \draw (f.center) to [out=165,in=-90] (copants.leftleg) to (tube.top) to [out=90,in=-165] (g.center);
        \draw (f.center) to [out=15,in=-90] (copants.rightleg);
        \draw (g.center) to [out=-15,in=90] (pants.rightleg);
      \end{scope}
    \end{tikzpicture}
  \end{equation*}
\end{proof}

\section{The CPM construction as optics}\label{sec:CPM}
In this section we show that over a $\dag$-compact closed category, the $\cpm$ construction embeds within optics.

\begin{defn}[CPM construction \cite{selinger_cpm}]
  Given a $\dag$-compact closed category $\cat{C}$, the category $\cpm(\cat{C})$ of completely positive maps has the same objects as $\cat{C}$.
  A morphism $f:A\morph{}B$ in $\cpm(\cat{C})$  is a morphism of type $A^*\otimes A\morph{}B^*\otimes B$ in $\cat{C}$ of the form
    \begin{equation}\label{eq:cpm}
      \tikzfigscale{0.9}{figs/cpm}
    \end{equation}
  where $(-)^*:\cat{C}\to\cat{C}$ is the conjugation functor.
  Composition and identities are inherited from $\cat{C}$.
\end{defn}

\begin{example}
The $\dag$-symmetric monoidal category $\cpm(\fhilb)$ is equivalent to the category of density operators between finite dimensional Hilbert spaces.
\end{example}

The $\optic$ and $\comb$ constructions provide another route to defining the category of completely positive maps.
We write $\doptic(\cat{C})$ and $\dcomb(\cat{C})$ for the subcategories of $\optic(\cat{C})$ and $\comb(\cat{C})$ respectively, generated by representatives of the form $(f,f^\dag)_E$.
The following proposition follows:

\begin{prop}\label{prop:cpmequiv}
  When $\cat{C}$ is $\dag$-compact closed, there is a symmetric monoidal isomorphism of categories $\doptic(\cat{C})\cong\dcomb(\cat{C})\cong\cpm(\cat{C})$.
\end{prop}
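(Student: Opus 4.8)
The plan is to prove the two isomorphisms separately, first reducing $\doptic(\cat{C})\cong\dcomb(\cat{C})$ to Proposition~\ref{prop:compactclosed} and then identifying $\dcomb(\cat{C})$ with $\cpm(\cat{C})$ by bending wires.

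First I would check that the dagger representatives $(f,f^\dag)_E$ are closed under the operations of $\comb(\cat{C})$: the identity is $(1,1)=(1,1^\dag)$, the tensor of two such pairs is again of this shape, and the composite $(f',f'^\dag)\circ(f,f^\dag)=((1\otimes f')f,\ f^\dag(1\otimes f'^\dag))$ is of dagger form since $f^\dag(1\otimes f'^\dag)=f^\dag(1\otimes f')^\dag=((1\otimes f')f)^\dag$, using that $\dag$ is an identity-on-objects contravariant monoidal functor. Hence $\doptic(\cat{C})$ and $\dcomb(\cat{C})$ are genuine symmetric monoidal subcategories. Since a $\dag$-compact closed category is in particular compact closed, Proposition~\ref{prop:compactclosed} supplies a symmetric monoidal isomorphism $\optic(\cat{C})\cong\comb(\cat{C})$ which is the identity on representatives; as it preserves the dagger form, it restricts to the desired isomorphism $\doptic(\cat{C})\cong\dcomb(\cat{C})$. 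It then remains to exhibit $\dcomb(\cat{C})\cong\cpm(\cat{C})$.

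For the second isomorphism, the identities and composites above force the objects of $\dcomb(\cat{C})$ to be the diagonal pairs $(A,A)$, which I identify with the objects $A$ of $\cpm(\cat{C})$. On morphisms I use that in the compact closed setting $\simcomb$ coincides with $\simsigma$, so by the Yoneda reduction of the remark following Proposition~\ref{prop:compactclosed} the comb class $(f,f^\dag)_E$ of $(A,A)\morph{}(B,B)$ is completely determined by the single map obtained by bending its wires through the compact structure, a map that lands in $\cat{C}(A^*\otimes A,B^*\otimes B)$. The key diagrammatic step is to verify that this bent map is exactly the morphism of CPM form \eqref{eq:cpm} built from the dilation $f:A\morph{}E\otimes B$: bending the leg carrying $f^\dag$ turns it into the conjugate $f^*$ (the standard coherence between dagger, transpose and the conjugation functor in a $\dag$-compact closed category), so the two halves of the comb become $f^*\otimes f$ with the environment $E$ discarded, which is precisely the CPM form.

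This assignment is a bijection on each hom-set: it is surjective because every morphism of $\cpm(\cat{C})$ is by definition built from some dilation $f:A\morph{}E\otimes B$ and so is the image of $(f,f^\dag)_E$; and it is well defined and injective because $\simcomb$ identifies exactly those dagger combs with equal bent map, matching the identification of dilations that induce equal CPM-form maps. Functoriality is routine: composition of dagger combs combines environments as $E\otimes E'$, matching the composition of CPM maps inherited from $\cat{C}$, and the tensor and symmetry are preserved likewise. Composing with the first isomorphism yields the full chain $\doptic(\cat{C})\cong\dcomb(\cat{C})\cong\cpm(\cat{C})$. The main obstacle is the diagrammatic identification of the previous paragraph: one must carefully track the interplay of the dagger, the transpose and the conjugation functor to confirm that the bent dagger comb is genuinely the CPM form, and check that the restricted comb equivalence is exactly equality of completely positive maps, neither finer nor coarser.
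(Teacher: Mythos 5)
Your proposal is correct and follows essentially the same route as the paper's (sketched) proof: the first isomorphism by restricting Proposition~\ref{prop:compactclosed} to the dagger-form representatives, and the second by sending $(A,A)\mapsto A$ and $(f,f^\dag)_E$ to the CPM form \eqref{eq:cpm}, with fullness immediate from the definition of $\cpm(\cat{C})$ and faithfulness obtained by evaluating the comb equivalence on the braid. Your additional checks (closure of dagger representatives under composition and tensor, and the dagger/transpose/conjugate bookkeeping showing the bent comb is exactly the CPM form) are exactly the details the paper leaves implicit.
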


\begin{proof}(Sketch).
  The isomorphism $\doptic(\cat{C})\cong\dcomb(\cat{C})$ follows by proposition \ref{prop:compactclosed}.
  The isomorphism $\dcomb(\cat{C})\cong\cpm(\cat{C})$ is given by sending $(A,A)\mapsto A$ on objects and $(f,f^\dag)_E$ to \eqref{eq:cpm}.
  Fullness is obvious and one can see it is faithful by inserting the braid into the comb equivalence relation \eqref{eq:combequiv}.
\end{proof}

There have been attempts to generalise the $\cpm$ construction to infinite dimensional quantum systems where one does not have compact closure.
For instance, in \cite{coecke_cpinf} the $\mathsf{CP}^\infty$ construction is developed which turns any monoidal $\dag$-category into a category of completely positive maps.
The category $\mathsf{CP}^\infty(\cat{C})$ is very similar to the category $\comb(\cat{C})$, the only difference being that $\mathsf{CP}^\infty$ only quantifies over the positive maps in the equivalence relation \eqref{eq:combequiv} (as opposed to all maps $\lambda$).
In the case that $\cat{C}$ is $\dag$-compact closed it is known that $\cpm(\cat{C})\cong\mathsf{CP}^\infty(\cat{C})$ and thus the $\mathsf{CP}^\infty$ construction produces an isomorphic category to $\doptic$ and $\dcomb$.
Dropping compact closure, but keeping the $\dag$-symmetric monoidal structure, $\doptic$ and $\dcomb$ yield two potential candiates for generalised categories of completely positive maps.

\section{$n$-Combs} \label{sec:ncombs}
In this section we consider generalisations of the $\optic$ and $\comb$ constructions to encompass $n$-combs.
There are several categorical structures that could provide an adequate semantics for dealing with the many inputs and outputs that a generalised $n$-comb could have.
Here we will use polycategories to handle $n$-combs.

A candidate definition of such an $n$-comb was suggested in \cite{roman_comb} as a generalisation of the $\optic$ construction.
We generalise this even further, obtaining a polycategory.
Our definition of the combs themselves is similar, but crucially our notion of composition is very different and coincides more closely with that of \cite{coecke_resources}.

\begin{defn}
  Given a symmetric monoidal category $\cat C$, the polycategory of $n$-combs $\OPTIC(\cat{C})$ has:
  \begin{description}
  \item[Objects:] Pairs of objects in $\cat C$.
  \item[Morphisms:]  The polymorphisms of type $[(A_1,A_1'),\ldots , (A_n,A_n')] \to [(B_1,B_1'),\ldots,(B_m,B_m')]$ are elements of the set (where the zero-fold tensor in $\cat C$ is the tensor unit):
  \begin{equation*}
   \int^{X_0,\ldots, X_{n+1}}
  \cat{C}\left(\bigotimes_{i=1}^n B_i, X_0\right) \times
   \prod_{i=1}^n \cat{C}(X_{i-1},X_{i}\otimes A_i)\times \cat{C}(X_{i}\otimes A_i', X_{i+1}) \times
  \cat{C}\left(X_{n+1},\bigotimes_{i=1}^n B_i'\right)
  \end{equation*}
  For example consider the string diagram for a polymorphism of this type (drawn from left to right to conserve space):

  \hspace*{-1cm}
  \begin{tabular}{c}
  $\left(\langle f_1, \ldots, f_n | g_1, \ldots, g_n \rangle_{X_1,\ldots, X_n}: [(A_1,A_1'),\ldots, (A_n,A_n')] \to [(B_1,B_1'),\ldots, (B_m,B_m')]  \right) :=$ \\
  $
  \scalebox{.96}{
  \rotatebox[origin=c]{-90}{
    \begin{tikzpicture}[baseline={([yshift=-.5ex]current bounding box.center)}]
      \node[Pants] (pants)  {};
      \node[Cyl, anchor=top,yscale=1.5] (tube) at (pants.leftleg) {};                                       
      \node[Copants, bot, anchor=leftleg] (copants) at (tube.bot) {};                    
      \node[Top3D] at (copants.rightleg) {};
      \node[Bot3D] at (pants.rightleg) {};
      \node[Bot3D] at (tube.center) {};
      \node[Copants,anchor=belt,yscale=2] (copants2) at (pants.belt) {};      
      \node[Bot3D] at (pants.belt) {};
      \node[Cyl,anchor=bot,height scale=0.65] (tube2) at (copants2.leftleg) {};     
      \node[Bot3D] at (tube2.top) {};
      \node[Top3D] at (copants2.rightleg) {};
      \node (dots) at ($(tube2.top) + (0,0.45)$) {$\vdots$};
      \node[rotate=90]  at ($(tube2.top) + (-.3,0.35)$) {}; 
      \node[Cyl,anchor=bot,height scale=0.65] (tube3) at ($(dots.center)+(0,0.25)$) {}; 
      \node[Bot3D] at (tube3.bot) {};
      \node[Pants,anchor=leftleg,yscale=2] (pants2) at (tube3.top) {};                 
      \node[Bot3D] at (pants2.rightleg) {};
      \node[Copants,bot,anchor=belt] (copants3) at (pants2.belt) {};                   
      \node[Cyl, anchor=bot,yscale=1.5] (tube4) at (copants3.leftleg) {};                               
      \node[Pants,anchor=leftleg] (pants3) at (tube4.top) {};                          
      \node[Bot3D] at (pants3.belt) {};
      \node[Top3D] at (copants3.rightleg) {};
      \node[Bot3D] at (tube4.center) {};
      \node[Bot3D] at (pants3.rightleg) {};
      \node[Copants,anchor=belt]  (copants4) at (pants3.belt) {};                        
      \node[Bot3D] at (copants4.leftleg) {};
      \node[Top3D] at (copants4.rightleg) {};
      \node (dots1) at ($(copants4.leftleg) + (-.25,0.45)$) {$\ddots$};
      \node[rotate=90]  at  ($(copants4.leftleg) + (-.5,0.2)$) {}; 
      \node[Copants,top,anchor=belt]  (copants5) at ($(dots1)  + (-.25,0.25)$) {};                        
      \node[Bot3D] at (copants5.belt) {};
      \node[Pants,anchor=belt]  (pants4) at (copants.belt) {};                        
      \node[Bot3D] at (pants4.leftleg) {};
      \node[Bot3D] at (pants4.rightleg) {};
      \node(dots2)  at ($(pants4.leftleg)+(-.25,-0.25)$)  {\reflectbox{$\ddots$}};
      \node[rotate=90]  at  ($(pants4.leftleg) + (-.5,-0.2)$) {}; 
      \node[Pants,bot,anchor=belt]  (pants5) at  ($(dots2)  + (-.25,-0.45)$){};                        
      \node[Bot3D] at (pants5.belt) {};
      \begin{scope}[internal string scope]
        \node[sq tiny label,rotate=90] (f0) at (copants.center) {$f_1$};
        \node[sq tiny label,rotate=90] (f1) at (copants2.center) {$f_2$};
        \node[sq tiny label,rotate=90] (fn) at (copants3.center) {$f_n$};
        \node[sq tiny label,rotate=90] (g0) at (pants.center) {$g_1$};
        \node[sq tiny label,rotate=90] (gn1) at (pants2.center) {$g_{n-1}$};
        \node[sq tiny label,rotate=90] (gn) at (pants3.center) {$g_n$};
        \node[tiny label, inner sep=0cm,minimum size=.24cm, fill opacity=0] (cotensor) at (copants4.center) {};
        \node at  (cotensor.center) {$\times$};
        \node[tiny label, inner sep=0cm,minimum size=.24cm, fill opacity=0] (cotensor1) at (copants5.center) {$\otimes$};
        \node at  (cotensor1.center) {$\times$};
        \node[tiny label, inner sep=0cm,minimum size=.24cm, fill opacity=0] (tensor) at (pants4.center) {$\otimes$};
        \node at  (tensor.center) {$\times$};
        \node[tiny label, inner sep=0cm,minimum size=.24cm, fill opacity=0] (tensor1) at (pants5.center) {$\otimes$};
        \node at  (tensor1.center) {$\times$};
        \draw (f0.center) to (copants.belt);
        \draw (g0.center) to (pants.belt);
        \draw (f0.center) to [out=165,in=-90] (copants.leftleg) to (tube.top) to [out=90,in=-165] (g0.center);
        \draw (f0.center) to [out=15,in=-90] (copants.rightleg);
        \draw (g0.center) to [out=-15,in=90] (pants.rightleg);
        \draw (f1.center) to (copants2.belt);
        \draw (gn1.center) to (pants2.belt);
        \draw (f1.center) to [out=165,in=-90] (copants2.leftleg) to (tube2.top);
        \draw (tube3.bot) to (pants2.leftleg) to [out=90,in=-165] (gn1.center);
        \draw (f1.center) to [out=15,in=-90] (copants2.rightleg);
        \draw (gn1.center) to [out=-15,in=90] (pants2.rightleg);
        \draw (fn.center) to (copants3.belt);
        \draw (fn.center) to [out=165,in=-90] (copants3.leftleg) to (tube4.top) to [out=90,in=-165] (gn.center);
        \draw (fn.center) to [out=15,in=-90] (copants3.rightleg);
        \draw (gn.center) to [out=-15,in=90] (pants3.rightleg);
        \draw (gn.center) to (pants3.belt);
        \draw (pants3.belt) to (cotensor);
        \draw (cotensor) to  [out=165,in=-90] (copants4.leftleg);
        \draw (cotensor) to  [out=15,in=-90] (copants4.rightleg);
        \draw (copants5.belt) to (cotensor1);
        \draw (cotensor1) to  [out=165,in=-90] (copants5.leftleg);
        \draw (cotensor1) to  [out=15,in=-90] (copants5.rightleg);
        \draw (copants.belt) to (tensor);
        \draw (tensor) to  [out=-165,in=90] (pants4.leftleg);
        \draw (tensor) to  [out=-15,in=90] (pants4.rightleg);
        \draw (pants5.belt) to (tensor1);
        \draw (tensor1) to  [out=-165,in=90] (pants5.leftleg);
        \draw (tensor1) to  [out=-15,in=90] (pants5.rightleg);
        \node[rotate=90] at ($(pants5.leftleg)+(0,-.35)$) {$B_1$};
        \node[rotate=90] at ($(pants5.rightleg)+(0,-.35)$) {$B_2$};
        \node[rotate=90] at ($(pants4.rightleg)+(0,-.38)$) {$B_m$};
        \node[rotate=90] at ($(copants5.leftleg)+(0,.35)$) {$B_1'$};
        \node[rotate=90] at ($(copants5.rightleg)+(0,.35)$) {$B_2'$};
        \node[rotate=90] at ($(copants4.rightleg)+(0,.38)$) {$B_m'$};
        \node[rotate=90] at ($(copants.rightleg)+(0,.35)$) {$A_1$};
        \node[rotate=90] at ($(copants2.rightleg)+(0,.35)$) {$A_2$};
        \node[rotate=90] at ($(copants3.rightleg)+(0,.35)$) {$A_n$};
        \node[rotate=90] at ($(pants.rightleg)+(0,-.35)$) {$A_1'$};
        \node[rotate=90] at ($(pants2.rightleg)+(0,-.5)$) {$A_{n-1}'$};
        \node[rotate=90] at ($(pants3.rightleg)+(0,-.35)$) {$A_n'$};
      \end{scope}
    \end{tikzpicture}}}$
  \end{tabular}

  The identities are the same as in optics.
  Given a map as above and another map

  $$\langle h_0, \ldots, h_{\ell} | k_0, \ldots, k_n \rangle_{Y_1,\ldots, Y_{\ell}}: [(C_1, C_1'), \ldots, (C_\ell, C_\ell')] \to [(D_1 ,D_1'), \ldots, (D_p,D_p') ]$$
   where $(C_q,C_q') =(B_j,B_j')$ for some $0 \leq q \leq \ell$, $0 \leq j \leq m$.
  Then the composite
  $$\langle f_1, \ldots, f_n | g_1, \ldots, g_n \rangle_{X_1,\ldots, X_n} \circ_{(B_j,B_j')} \langle h_0, \ldots, h_{\ell} | k_0, \ldots, k_n \rangle_{Y_1,\ldots, Y_{\ell}}$$

  is given by plugging the first comb into the $(B_j,B_j')$ hole and the collapsing the bubble.  This can be verified to produce a diagram of the same shape via a lengthy, yet elementary application of the coend calculus, or equivalently a composition of the 2-cells \eqref{eq:tube2cells} and associators.

  \end{description}
\end{defn}

There is also a polycategory of $n$-combs that generalises the $\comb$ construction,
\begin{defn}
  Given a symmetric monoidal category $\cat{C}$, the polycategory of $n$-combs $\COMB(\cat{C})$ has the same objects as $\OPTIC(\cat{C})$.
  The polymorphisms are given by tuples of maps under a generalisation of the comb equivalence relation where two combs are equivalent if they are equal on all extended inputs:
  \begin{equation*}
    \tikzfigscale{0.9}{figs/ncombequiv}
  \end{equation*}
  Composition and identities are the same as in $\COMB(\cat{C})$.
\end{defn}

As in the case of 1-combs we can always quotient the intensional optics definition to get the extensional comb definition:
\begin{prop}
  \label{prop:polyfunctor}
  There is a full and bijective on objects polyfunctor $\OPTIC(\cat{C})\morph{}\COMB(\cat{C})$.
\end{prop}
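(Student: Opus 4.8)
The plan is to follow the strategy of Proposition~\ref{prop:optic_comb}, now in the polycategorical setting. Since $\OPTIC(\cat{C})$ and $\COMB(\cat{C})$ have literally the same objects (pairs of objects of $\cat{C}$), the functor $F$ will be taken to be the identity on objects, which is immediately bijective on objects. Moreover, both polycategories have polymorphisms built from the \emph{same} raw data, namely tuples $\langle f_1,\ldots,f_n \mid g_1,\ldots,g_n\rangle_{X_1,\ldots,X_n}$, the only difference being the equivalence relation by which these tuples are quotiented: $\OPTIC(\cat{C})$ uses $\simopt$ (sliding along the internal environment wires), whereas $\COMB(\cat{C})$ uses the generalised comb equivalence (equality on all extended inputs). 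Thus on morphisms I define $F$ to send the $\simopt$-class of a tuple to the extensional class of the same tuple.

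The core of the proof is \textbf{well-definedness}: I must show that optic equivalence is finer than the generalised extensional equivalence, i.e.\ that $\simopt$-equivalent tuples are equal on all extended inputs. Two representatives are optic-equivalent when they differ by sliding a map $v$ along one of the internal environment wires $X_i$. The claim is that after plugging arbitrary first-order test maps into all of the holes and the boundary and then collapsing every bubble via the 2-cells~\eqref{eq:tube2cells}, the resulting morphism of $\cat{C}$ does not depend on which side of the wire $X_i$ the map $v$ sits on: once the bubble across $X_i$ is popped, $v$ is simply composed into the ambient morphism along that wire, and associativity of composition in $\cat{C}$ renders the two placements equal. Phrased via the universal property, the probe-and-pop assignment is a cowedge for the functor defining the coend, so it factors uniquely through it; this is the $n$-ary analogue of the remark following Proposition~\ref{prop:optic_comb}.

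\textbf{Fullness} is then essentially free: every extensional polymorphism is the class of some raw tuple, and that same tuple determines an optic whose image under $F$ is the chosen extensional class, so $F$ is surjective on each hom-set. It remains to check that $F$ is a genuine \textbf{polyfunctor}, i.e.\ that it preserves identities and composition. Identities agree because they are defined identically in both constructions. For composition along a chosen object, the operation $\circ_{(B_j,B_j')}$ is in each case defined on raw tuples by the same plugging-and-bubble-collapsing procedure; since $F$ is the identity on raw tuples and the optic quotient has just been shown to descend to the extensional quotient, composition in $\OPTIC(\cat{C})$ is carried to composition in $\COMB(\cat{C})$.

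I expect the main obstacle to be the bookkeeping in the well-definedness step: unlike the 1-comb case there are $n$ holes, $m$ outputs and several environment wires $X_0,\ldots,X_{n+1}$, so one must carefully track sliding along an arbitrary $X_i$ and verify that each bubble-collapse leaves the probed morphism unchanged regardless of the placement of $v$. A secondary subtlety is confirming that the polycomposition $\circ_{(B_j,B_j')}$ descends to both quotients compatibly, which amounts to checking that the lengthy coend-calculus manipulation used to define composition in $\OPTIC(\cat{C})$ projects onto the corresponding extensional identity in $\COMB(\cat{C})$.
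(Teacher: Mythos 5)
Your proposal is correct and takes essentially the same route as the paper: the paper's (sketch) proof also observes that evaluating a comb on the given test maps $\lambda_1,\dots,\lambda_n$ and collapsing the cobordisms gives a cowedge, hence factors uniquely through the coend, exactly as in Proposition~\ref{prop:optic_comb}. Your additional spelled-out checks of fullness and preservation of polycomposition are routine and consistent with what the paper leaves implicit.
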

\begin{proof}(Sketch)
  The proof is similar to proposition \ref{prop:optic_comb}: removing the cobordisms and evaluating the comb on the given $\lambda_1,\dots,\lambda_n$ gives a cowedge and thus factorises uniquely via the coend.
\end{proof}

\begin{lem}
When $\cat C$ is compact closed, $\OPTIC(\cat{C})$ and $\COMB(\cat{C})$ are $*$-polycategories.
\end{lem}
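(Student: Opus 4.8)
The plan is to reduce the polycategorical structure to the compact closed structure of $\cat{C}$ by coend calculus, so that both $\OPTIC(\cat{C})$ and $\COMB(\cat{C})$ become \emph{representable} (their hom-sets are single hom-sets of $\cat{C}$), and then to read off the $*$-structure directly from this description. Recall that a $*$-polycategory is a symmetric polycategory equipped with an involutive duality $(-)^{*}$ on objects together with natural bijections transporting an object between the domain and codomain of a polymorphism at the cost of dualising it, subject to coherence with the symmetric action and, crucially, with cut. The strategy is to exhibit this data explicitly and then let compact closed coherence discharge the axioms.

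First I would collapse the distinction between the two constructions. In the compact closed case the diagrammatic argument of Proposition~\ref{prop:compactclosed} applies verbatim to polymorphisms: every sliding witnessed by the extensional relation is realised by bending the shared environment wires through cups and caps, so comb equivalence implies optic equivalence. Hence the full, bijective-on-objects polyfunctor of Proposition~\ref{prop:polyfunctor} is also faithful, giving an isomorphism $\OPTIC(\cat{C})\cong\COMB(\cat{C})$ of polycategories; it therefore suffices to treat one of them.

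Next I would compute the hom-sets. Dualising each factor $\cat{C}(X_k\otimes A', X_{k+1})\cong\cat{C}(X_k, X_{k+1}\otimes {A'}^{*})$ and then eliminating the environment variables $X_0,\dots,X_{n+1}$ one at a time by co-Yoneda reduction (exactly as in the remark following Proposition~\ref{prop:compactclosed}, whose one-input one-output instance is $\cat{C}(A\otimes B', A'\otimes B)$) collapses the whole coend to a single hom-set, naturally isomorphic up to symmetry to
\[
  \cat{C}\!\left(\bigotimes_{i} A_i \otimes \bigotimes_{j} B_j',\ \bigotimes_{i} A_i' \otimes \bigotimes_{j} B_j\right).
\]
Thus an input $(A,A')$ contributes $A$ to the domain and $A'$ to the codomain, while an output $(B,B')$ contributes $B'$ to the domain and $B$ to the codomain. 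Setting $(A,A')^{*} := (A',A)$, which is clearly involutive, an input $(A,A')$ and an output $(A',A)=(A,A')^{*}$ deposit the very same tensor factors on the two sides, so the transposition bijections moving an object between domain and codomain are, under the reduction, literally identities (modulo the symmetry isomorphisms of $\cat{C}$); naturality, involutivity and compatibility with the symmetric action are then immediate.

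The main obstacle is twofold: carrying out the iterated coend reduction cleanly for arbitrary $n$ and $m$ while tracking the precise permutation of tensor factors (the same ``lengthy yet elementary'' bookkeeping already flagged for composition in the definition of $\OPTIC$), and verifying the one genuinely structural axiom, namely that cut commutes with the transposition maps. In the representable picture cut is composition in $\cat{C}$ along a single pair of wires effected by a cup--cap, and transposition is a further wire-bend, so their compatibility is precisely the yanking (snake) equations of the compact closed structure. Once the representable description is pinned down, the remaining $*$-polycategory axioms hold by compact closed coherence.
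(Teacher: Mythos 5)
Your argument is correct at essentially the same sketch level as the paper's, but it takes a genuinely different route. The paper never collapses any hom-sets: it equips $\OPTIC(\cat{C})$ with the $*$-structure intensionally, exhibiting for the dual $(A,A')^{*}=(A',A)$ explicit unit and counit polymorphisms $\eta$ and $\varepsilon$ as internal string diagrams (bent identity combs built from the cups and caps of $\cat{C}$), with the snake equations inherited from those of $\cat{C}$; it then transports this structure to $\COMB(\cat{C})$ along the polyfunctor of Proposition \ref{prop:polyfunctor}. That transport uses only functoriality, fullness and bijectivity on objects --- not faithfulness --- so the paper obtains the lemma without invoking $\OPTIC(\cat{C})\cong\COMB(\cat{C})$, an isomorphism it proves only afterwards. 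You work in the opposite order: you first establish that isomorphism (so the paper's final proposition becomes a sub-step of your proof), then make the polymorphism sets representable by iterated Yoneda reduction and read off the transposition bijections as identities up to symmetry, with compatibility of transposition and cut discharged by yanking. Your route buys a concrete description of the polymorphism sets in the compact closed case and renders the $*$-polycategory axioms nearly tautological; its cost is front-loading the coend bookkeeping and depending on faithfulness of the quotient polyfunctor, which the paper's construction never needs. One harmless discrepancy: your collapsed hom-set follows the 1-comb convention of the remark after Proposition \ref{prop:compactclosed} (domain pairs contribute their first component to the domain), whereas in the paper's coend formula for $\OPTIC(\cat{C})$ the domain pairs are the holes, so each component lands on the side opposite to where you place it; since your assignment is self-consistent and the duality $(A,A')^{*}=(A',A)$ swaps components, this convention flip does not affect the argument.
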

\begin{proof}(Sketch)
In $\OPTIC(\cat{C})$, the unit and counits which generate the *-polycategory structure are given by the following (rotated) internal string diagrams:

$$
\eta:=
\rotatebox[origin=c]{-90}{
   \begin{tikzpicture}[baseline={([yshift=-.5ex]current bounding box.center)}]
      \node[Pants] (pants) {};
      \node[Cyl, anchor=top] (tube) at (pants.leftleg){};
      \node[Copants, bot, anchor=leftleg] (copants) at (tube.bot) {};
      \node[Top3D] at (copants.rightleg) {};
      \node[Bot3D] at (pants.rightleg) {};
      \node[Bot3D] at (tube.center) {};
      \node[Cap,bot] at (pants.belt) {};
      \node[Pants, anchor=belt] (pants1) at (copants.belt) {};
      \node[Cyl,anchor=top] (tube1) at (pants1.leftleg) {};
      \node[Copants, bot, anchor=leftleg] (copants1) at (tube1.bot) {};
      \node[Bot3D] at (pants1.rightleg) {};
      \node[Bot3D] at (tube1.center) {};
      \node[Top3D] at (copants1.rightleg) {};
      \node[Cup,bot] at (copants1.belt) {};
     \begin{scope}[internal string scope]
      \draw [bend left=90, looseness=3] ($(pants.belt)+(-.1,0)$) to ($(pants.belt)+(.1,0)$);
      \draw [in=90, out=-90, looseness=1.5] ($(pants.belt)+(.1,0)$) to (pants.leftleg);
      \draw [in=90, out=-90, looseness=1.5] ($(pants.belt)+(-.1,0)$) to (pants.rightleg);
      \draw (pants.leftleg) to (copants.leftleg);
      \draw [in=90, out=-90] (copants.leftleg) to ($(pants1.belt)+(-.1,0)$);
      \draw [in=90, out=-90] (copants.rightleg) to ($(pants1.belt)+(.1,0)$);
      \draw [in=90, out=-90] ($(pants1.belt)+(-.1,0)$) to (pants1.leftleg);
      \draw [in=90, out=-90]  ($(pants1.belt)+(.1,0)$) to (pants1.rightleg);
      \draw (pants1.leftleg) to (copants1.leftleg);
      \draw [bend right=90, looseness=2] (copants1.leftleg) to (copants1.rightleg);
     \end{scope}
   \end{tikzpicture}}
\hspace*{1cm}
\varepsilon:=
\rotatebox[origin=c]{-90}{
  \begin{tikzpicture}[baseline={([yshift=-.5ex]current bounding box.center)}]
    \node[Pants,xscale=1,bot] (pants) {};
    \node[Copants,xscale=1, top, anchor=belt] (copants) at (pants.belt) {};
    \node[Bot3D,xscale=1] at (pants.belt) {};
   \begin{scope}[internal string scope]
    \draw [in=90, out=-90, looseness=0.75] (copants.rightleg) to (pants.leftleg);
    \draw [in=90, out=-90, looseness=0.75] (copants.leftleg) to (pants.rightleg);
   \end{scope}
  \end{tikzpicture}
}
$$

 The $*$-polycategory structure of $\COMB(\cat{C})$ is transported along the polyfunctor in Proposition \ref{prop:polyfunctor}.
\end{proof}

\begin{prop}
  When $\cat{C}$ is compact closed there is an isomorphism of polycategories $\OPTIC(\cat{C}) \cong \COMB(\cat{C})$.
\end{prop}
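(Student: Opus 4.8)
The plan is to upgrade the full, bijective-on-objects polyfunctor $\OPTIC(\cat{C})\morph{}\COMB(\cat{C})$ of Proposition \ref{prop:polyfunctor} to an isomorphism by establishing that it is faithful. Since this polyfunctor acts by quotienting $\simopt$-classes down to $\simcomb$-classes, its well-definedness already encodes the implication $\simopt\Rightarrow\simcomb$, and faithfulness is precisely the converse implication $\simcomb\Rightarrow\simopt$ on polymorphisms of each fixed arity. This is the direct $n$-ary analogue of Proposition \ref{prop:compactclosed}, where the entire content likewise lay in this one implication.

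For the implication itself I would follow the coend-calculus route indicated in the remarks after Proposition \ref{prop:compactclosed}. The key point is that the chain of factors is linearly threaded by the environment objects $X_0,\dots,X_{n+1}$, so these can be eliminated one at a time by co-Yoneda. The subtlety — and exactly where compact closure enters — is that an environment variable such as $X_1$ occurs not bare but tensored with a hole, as $X_1\otimes A_1$ and $X_1\otimes A_1'$; using the compact structure to bend the $A_1$ leg rewrites $\cat{C}(\bigotimes B_i, X_1\otimes A_1)\cong\cat{C}(\bigotimes B_i\otimes A_1^*, X_1)$, making $X_1$ appear as a bare target so that $\int^{X_1}\cat{C}(\bigotimes B_i\otimes A_1^*,X_1)\times\cat{C}(X_1\otimes A_1',X_2)\cong\cat{C}(\bigotimes B_i\otimes A_1^*\otimes A_1',X_2)$. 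Iterating over all $n+2$ variables collapses the entire coend to a single hom-set of $\cat{C}$ built from the $A_i,A_i',B_j,B_j'$ and their duals, so that two optics agree exactly when they name the same element of this collapsed hom-set. Equivalently, one may carry this out pictorially by bending every environment wire $X_i$ into cups and caps and sliding the factors $f_i,g_i$ together along them (sliding being exactly what $\simopt$ permits), unfolding the comb-shaped cobordism into a single morphism with all holes exposed, exactly as in the string-diagram chain of Proposition \ref{prop:compactclosed}.

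It then remains to see that $\simcomb$ forces equality of this single unfolded morphism. The $n$-comb extensional relation equates combs on all extended inputs $\lambda_1,\dots,\lambda_n$; specialising these to the appropriate generalised braids — the $n$-ary analogue of $\simsigma$ from \eqref{eq:braidequiv} — and using compact closure to bend the open $A_i,A_i'$ legs around, these probes recover the full unfolded morphism, so the two combs coincide as the single element of the collapsed hom-set and are therefore $\simopt$-equal. The main obstacle is the bookkeeping: checking that the successive Yoneda reductions, equivalently the sequence of bubble-popping and sliding 2-cells of \eqref{eq:tube2cells} together with associators, compose coherently and uniformly across all $n$ linked pieces, and that the chosen braid probes genuinely inject all of the information carried by the unfolded morphism rather than only a proper quotient of it. Everything else is the routine $n$-fold repetition of the 1-comb argument.
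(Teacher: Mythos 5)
Your proposal is correct and takes essentially the same route as the paper: the paper's own proof is a one-line sketch that the isomorphism follows ``in a similar way to Proposition \ref{prop:compactclosed}, by pulling all of the circuits into the same bubble,'' which is exactly your plan of reducing to faithfulness of the polyfunctor of Proposition \ref{prop:polyfunctor}, using compact closure to collapse the chained environments into a single hom-set (pictorially, sliding everything into one bubble along bent wires), and pinning down the resulting morphism via braid probes as in the $\simsigma$ argument. Your write-up in fact supplies more detail (the iterated co-Yoneda reduction and the $n$-ary braid probes) than the paper records.
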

\begin{proof}(Sketch)
The isomorphism is shown in a similar way to the proof of Proposition \ref{prop:compactclosed}, by pulling all of the circuits into the same bubble.
\end{proof}

\section{Conclusion and future work}
In this article we have considered some categorical approaches to modelling combs with a particular focus on the operational motivations typically pursued in the quantum literature.
There are several lines of future work we are actively investigating:

\begin{itemize}
  \item It would be clarifying to pin down precisely when $\optic(\cat{C})\cong\comb(\cat{C})$, or at least know whether this holds in cases beyond the few investigated here.
  Particularly for quantum theory we would like to know what happens in the case of $\mathsf{Isometry}$ and $\mathsf{CPTP}$.
  The cases of $*$-autonomous categories and monoidally closed categories would also be interesting so we could better understand any connections with the $\mathsf{Caus}$-construction \cite{kissinger_caus}.
  \item It may be possible to use profunctors to capture the causal structure of maps.
  Informally, one can replace causal graphs with profunctor tubes whose topology acts to restrict the families of maps that are compatible with the causal structure, for instance by enforcing one-way signalling constraints.
  \item We have reason to believe that the category of Tambara modules \cite{tambara} (equivalently the presheaf category of optics \cite{pastro_street}) is a good setting for modelling quantum supermaps more generally, possibly allowing for the modelling of maps like the quantum \texttt{switch} alongside combs.
  There are several operational principles one might ask of a quantum supermap to ensure that it is compatible with the monoidal structure of the category of first-order maps.  These principles seem to translate pleasingly into the structure of Tambara module homomorphisms.
  \item Given a $\cat{V}$-enriched category $\cat{C}$, it is not clear to us whether the category $\comb(\cat{C})$ inherits the enrichment.
  This is relevant to quantum theory because taking probabilistic mixures of quantum processes can be modelled by enrichment in the category $\mathsf{CMon}$ of commutative monoids \cite{heunen_vicary,gogioso_cpt}.
  On the other hand, it is immediate that $\optic(\cat{C})$ inherits the enrichment of $\cat{C}$ and thus might be a better setting for modelling quantum combs.
  \item In section \ref{sec:ncombs} we provided a polycategorical semantics for $n$-combs.
  We are also persuing the possibility of a double categorical framework which might be cleaner and possibly more expressive.
  Indeed, as we were writing this article we became aware of \cite{boisseau_cornering} where such a framework is developed.
\end{itemize}

\subsection*{Acknowledgements}
Many thanks to the authors of \cite{vicary_bordism} for use of their TikZ package for producing internal string diagrams and to Guillaume Boisseau and Matt Wilson for helpful discussions and comments. JH is supported by University College London and the EPSRC [grant number EP/L015242/1].

\bibliographystyle{eptcs}
\bibliography{bibliography}

\begin{thebibliography}{10}
\providecommand{\bibitemdeclare}[2]{}
\providecommand{\surnamestart}{}
\providecommand{\surnameend}{}
\providecommand{\urlprefix}{Available at }
\providecommand{\url}[1]{\texttt{#1}}
\providecommand{\href}[2]{\texttt{#2}}
\providecommand{\urlalt}[2]{\href{#1}{#2}}
\providecommand{\doi}[1]{doi:\urlalt{http://dx.doi.org/#1}{#1}}
\providecommand{\eprint}[1]{arXiv:\urlalt{https://arxiv.org/abs/#1}{#1}}
\providecommand{\bibinfo}[2]{#2}

\bibitemdeclare{inproceedings}{abramsky_cqm}
\bibitem{abramsky_cqm}
\bibinfo{author}{Samson \surnamestart Abramsky\surnameend} \&
  \bibinfo{author}{Bob \surnamestart Coecke\surnameend} (\bibinfo{year}{2004}):
  \emph{\bibinfo{title}{A categorical semantics of quantum protocols}}.
\newblock In: {\sl \bibinfo{booktitle}{Proceedings of the 19th Annual IEEE
  Symposium on Logic in Computer Science, 2004.}}, pp.
  \bibinfo{pages}{415--425}, \doi{10.1109/LICS.2004.1319636}.

\bibitemdeclare{article}{barrett_gpts}
\bibitem{barrett_gpts}
\bibinfo{author}{Jonathan \surnamestart Barrett\surnameend}
  (\bibinfo{year}{2007}): \emph{\bibinfo{title}{Information processing in
  generalized probabilistic theories}}.
\newblock {\sl \bibinfo{journal}{Phys. Rev. A}} \bibinfo{volume}{75}, p.
  \bibinfo{pages}{032304}, \doi{10.1103/PhysRevA.75.032304}.

\bibitemdeclare{misc}{vicary_bordism}
\bibitem{vicary_bordism}
\bibinfo{author}{Bruce \surnamestart Bartlett\surnameend},
  \bibinfo{author}{Christopher~L. \surnamestart Douglas\surnameend},
  \bibinfo{author}{Christopher~J. \surnamestart Schommer-Pries\surnameend} \&
  \bibinfo{author}{Jamie \surnamestart Vicary\surnameend}
  (\bibinfo{year}{2015}): \emph{\bibinfo{title}{Modular categories as
  representations of the 3-dimensional bordism 2-category}},
  \doi{10.48550/ARXIV.1509.06811}.
\newblock \eprint{1509.06811}.

\bibitemdeclare{misc}{boisseau_cornering}
\bibitem{boisseau_cornering}
\bibinfo{author}{Guillaume \surnamestart Boisseau\surnameend},
  \bibinfo{author}{Chad \surnamestart Nester\surnameend} \&
  \bibinfo{author}{Mario \surnamestart Roman\surnameend}
  (\bibinfo{year}{2022}): \emph{\bibinfo{title}{Cornering Optics}},
  \doi{10.48550/ARXIV.2205.00842}.
\newblock \eprint{2205.00842}.

\bibitemdeclare{inproceedings}{2106.07763}
\bibitem{2106.07763}
\bibinfo{author}{Guillaume \surnamestart Boisseau\surnameend} \&
  \bibinfo{author}{Paweł \surnamestart Sobociński\surnameend}
  (\bibinfo{year}{2021}): \emph{\bibinfo{title}{String Diagrammatic Electrical
  Circuit Theory}}.
\newblock \doi{10.48550/arXiv.2106.07763}.

\bibitemdeclare{inproceedings}{gaa}
\bibitem{gaa}
\bibinfo{author}{Filippo \surnamestart Bonchi\surnameend},
  \bibinfo{author}{Robin \surnamestart Piedeleu\surnameend},
  \bibinfo{author}{Pawel \surnamestart Soboci{\'n}ski\surnameend} \&
  \bibinfo{author}{Fabio \surnamestart Zanasi\surnameend}
  (\bibinfo{year}{2019}): \emph{\bibinfo{title}{Graphical affine algebra}}.
\newblock In: {\sl \bibinfo{booktitle}{2019 34th Annual ACM/IEEE Symposium on
  Logic in Computer Science (LICS)}}, \bibinfo{organization}{IEEE}, pp.
  \bibinfo{pages}{1--12}, \doi{10.1109/LICS.2019.8785877}.

\bibitemdeclare{article}{chiribella_switch2}
\bibitem{chiribella_switch2}
\bibinfo{author}{Giulio \surnamestart Chiribella\surnameend}
  (\bibinfo{year}{2012}): \emph{\bibinfo{title}{Perfect discrimination of
  no-signalling channels via quantum superposition of causal structures}}.
\newblock {\sl \bibinfo{journal}{Phys. Rev. A}} \bibinfo{volume}{86}, p.
  \bibinfo{pages}{040301}, \doi{10.1103/PhysRevA.86.040301}.

\bibitemdeclare{article}{chiribella_circuits}
\bibitem{chiribella_circuits}
\bibinfo{author}{Giulio \surnamestart Chiribella\surnameend},
  \bibinfo{author}{Giacomo~Mauro \surnamestart D'Ariano\surnameend} \&
  \bibinfo{author}{Paolo \surnamestart Perinotti\surnameend}
  (\bibinfo{year}{2008}): \emph{\bibinfo{title}{Quantum Circuit Architecture}}.
\newblock {\sl \bibinfo{journal}{Physical Review Letters}}
  \bibinfo{volume}{101}(\bibinfo{number}{6}),
  \doi{10.1103/physrevlett.101.060401}.

\bibitemdeclare{article}{chiribella_supermaps}
\bibitem{chiribella_supermaps}
\bibinfo{author}{Giulio \surnamestart Chiribella\surnameend},
  \bibinfo{author}{Giacomo~Mauro \surnamestart D'Ariano\surnameend} \&
  \bibinfo{author}{Paolo \surnamestart Perinotti\surnameend}
  (\bibinfo{year}{2008}): \emph{\bibinfo{title}{Transforming quantum
  operations: Quantum supermaps}}.
\newblock {\sl \bibinfo{journal}{{EPL} (Europhysics Letters)}}
  \bibinfo{volume}{83}(\bibinfo{number}{3}), p. \bibinfo{pages}{30004},
  \doi{10.1209/0295-5075/83/30004}.

\bibitemdeclare{article}{chiribella_networks}
\bibitem{chiribella_networks}
\bibinfo{author}{Giulio \surnamestart Chiribella\surnameend},
  \bibinfo{author}{Giacomo~Mauro \surnamestart D'Ariano\surnameend} \&
  \bibinfo{author}{Paolo \surnamestart Perinotti\surnameend}
  (\bibinfo{year}{2009}): \emph{\bibinfo{title}{Theoretical framework for
  quantum networks}}.
\newblock {\sl \bibinfo{journal}{Phys. Rev. A}} \bibinfo{volume}{80}, p.
  \bibinfo{pages}{022339}, \doi{10.1103/PhysRevA.80.022339}.

\bibitemdeclare{article}{chiribella_probabilistic}
\bibitem{chiribella_probabilistic}
\bibinfo{author}{Giulio \surnamestart Chiribella\surnameend},
  \bibinfo{author}{Giacomo~Mauro \surnamestart D'Ariano\surnameend} \&
  \bibinfo{author}{Paolo \surnamestart Perinotti\surnameend}
  (\bibinfo{year}{2010}): \emph{\bibinfo{title}{Probabilistic theories with
  purification}}.
\newblock {\sl \bibinfo{journal}{Phys. Rev. A}} \bibinfo{volume}{81}, p.
  \bibinfo{pages}{062348}, \doi{10.1103/PhysRevA.81.062348}.

\bibitemdeclare{article}{chiribella_switch}
\bibitem{chiribella_switch}
\bibinfo{author}{Giulio \surnamestart Chiribella\surnameend},
  \bibinfo{author}{Giacomo~Mauro \surnamestart D'Ariano\surnameend},
  \bibinfo{author}{Paolo \surnamestart Perinotti\surnameend} \&
  \bibinfo{author}{Benoit \surnamestart Valiron\surnameend}
  (\bibinfo{year}{2013}): \emph{\bibinfo{title}{Quantum computations without
  definite causal structure}}.
\newblock {\sl \bibinfo{journal}{Phys. Rev. A}} \bibinfo{volume}{88}, p.
  \bibinfo{pages}{022318}, \doi{10.1103/PhysRevA.88.022318}.

\bibitemdeclare{misc}{clarke_profunctor}
\bibitem{clarke_profunctor}
\bibinfo{author}{Bryce \surnamestart Clarke\surnameend}, \bibinfo{author}{Derek
  \surnamestart Elkins\surnameend}, \bibinfo{author}{Jeremy \surnamestart
  Gibbons\surnameend}, \bibinfo{author}{Fosco \surnamestart
  Loregian\surnameend}, \bibinfo{author}{Bartosz \surnamestart
  Milewski\surnameend}, \bibinfo{author}{Emily \surnamestart
  Pillmore\surnameend} \& \bibinfo{author}{Mario \surnamestart
  Román\surnameend} (\bibinfo{year}{2020}): \emph{\bibinfo{title}{Profunctor
  optics, a categorical update}}.
\newblock \eprint{2001.07488}.

\bibitemdeclare{article}{coecke_picturalism}
\bibitem{coecke_picturalism}
\bibinfo{author}{Bob \surnamestart Coecke\surnameend} (\bibinfo{year}{2010}):
  \emph{\bibinfo{title}{Quantum picturalism}}.
\newblock {\sl \bibinfo{journal}{Contemporary Physics}}
  \bibinfo{volume}{51}(\bibinfo{number}{1}), pp. \bibinfo{pages}{59--83},
  \doi{10.1080/00107510903257624}.

\bibitemdeclare{article}{coecke_zx}
\bibitem{coecke_zx}
\bibinfo{author}{Bob \surnamestart Coecke\surnameend} \& \bibinfo{author}{Ross
  \surnamestart Duncan\surnameend} (\bibinfo{year}{2011}):
  \emph{\bibinfo{title}{Interacting quantum observables: categorical algebra
  and diagrammatics}}.
\newblock {\sl \bibinfo{journal}{New Journal of Physics}}
  \bibinfo{volume}{13}(\bibinfo{number}{4}), p. \bibinfo{pages}{043016},
  \doi{10.1088/1367-2630/13/4/043016}.

\bibitemdeclare{article}{coecke_resources}
\bibitem{coecke_resources}
\bibinfo{author}{Bob \surnamestart Coecke\surnameend}, \bibinfo{author}{Tobias
  \surnamestart Fritz\surnameend} \& \bibinfo{author}{Robert~W. \surnamestart
  Spekkens\surnameend} (\bibinfo{year}{2016}): \emph{\bibinfo{title}{A
  mathematical theory of resources}}.
\newblock {\sl \bibinfo{journal}{Information and Computation}}
  \bibinfo{volume}{250}, pp. \bibinfo{pages}{59--86},
  \doi{10.1016/j.ic.2016.02.008}.

\bibitemdeclare{article}{coecke_cpinf}
\bibitem{coecke_cpinf}
\bibinfo{author}{Bob \surnamestart Coecke\surnameend} \& \bibinfo{author}{Chris
  \surnamestart Heunen\surnameend} (\bibinfo{year}{2016}):
  \emph{\bibinfo{title}{Pictures of complete positivity in arbitrary
  dimension}}.
\newblock {\sl \bibinfo{journal}{Information and Computation}}
  \bibinfo{volume}{250}, pp. \bibinfo{pages}{50--58},
  \doi{10.1016/j.ic.2016.02.007}.
\newblock \bibinfo{note}{Quantum Physics and Logic}.

\bibitemdeclare{inproceedings}{glagr}
\bibitem{glagr}
\bibinfo{author}{Cole \surnamestart Comfort\surnameend} \&
  \bibinfo{author}{Aleks \surnamestart Kissinger\surnameend}
  (\bibinfo{year}{2021}): \emph{\bibinfo{title}{A Graphical Calculus for
  Lagrangian Relations}}.
\newblock \doi{10.48550/arXiv.2105.06244}.

\bibitemdeclare{article}{eggeling}
\bibitem{eggeling}
\bibinfo{author}{T~\surnamestart Eggeling\surnameend},
  \bibinfo{author}{D~\surnamestart Schlingemann\surnameend} \&
  \bibinfo{author}{R.~F \surnamestart Werner\surnameend}
  (\bibinfo{year}{2002}): \emph{\bibinfo{title}{Semicausal operations are
  semilocalizable}}.
\newblock {\sl \bibinfo{journal}{Europhysics Letters ({EPL})}}
  \bibinfo{volume}{57}(\bibinfo{number}{6}), pp. \bibinfo{pages}{782--788},
  \doi{10.1209/epl/i2002-00579-4}.

\bibitemdeclare{article}{fritz}
\bibitem{fritz}
\bibinfo{author}{Tobias \surnamestart Fritz\surnameend} (\bibinfo{year}{2020}):
  \emph{\bibinfo{title}{A synthetic approach to Markov kernels, conditional
  independence and theorems on sufficient statistics}}.
\newblock {\sl \bibinfo{journal}{Advances in Mathematics}}
  \bibinfo{volume}{370}, p. \bibinfo{pages}{107239},
  \doi{10.1016/j.aim.2020.107239}.

\bibitemdeclare{article}{gogioso_cpt}
\bibitem{gogioso_cpt}
\bibinfo{author}{Stefano \surnamestart Gogioso\surnameend} \&
  \bibinfo{author}{Carlo~Maria \surnamestart Scandolo\surnameend}
  (\bibinfo{year}{2018}): \emph{\bibinfo{title}{Categorical Probabilistic
  Theories}}.
\newblock {\sl \bibinfo{journal}{Electronic Proceedings in Theoretical Computer
  Science}} \bibinfo{volume}{266}, pp. \bibinfo{pages}{367--385},
  \doi{10.4204/eptcs.266.23}.

\bibitemdeclare{book}{heunen_vicary}
\bibitem{heunen_vicary}
\bibinfo{author}{Chris \surnamestart Heunen\surnameend} \&
  \bibinfo{author}{Jamie \surnamestart Vicary\surnameend}
  (\bibinfo{year}{2019}): \emph{\bibinfo{title}{Categories for Quantum Theory:
  An Introduction}}.
\newblock \bibinfo{publisher}{Oxford University Press},
  \doi{10.1093/oso/9780198739623.001.0001}.

\bibitemdeclare{article}{hu_tubes}
\bibitem{hu_tubes}
\bibinfo{author}{Nick \surnamestart Hu\surnameend} \& \bibinfo{author}{Jamie
  \surnamestart Vicary\surnameend} (\bibinfo{year}{2021}):
  \emph{\bibinfo{title}{Traced Monoidal Categories as Algebraic Structures in
  {Prof}}}.
\newblock {\sl \bibinfo{journal}{Electronic Proceedings in Theoretical Computer
  Science}} \bibinfo{volume}{351}, pp. \bibinfo{pages}{84--97},
  \doi{10.4204/eptcs.351.6}.

\bibitemdeclare{article}{kissinger_caus}
\bibitem{kissinger_caus}
\bibinfo{author}{Aleks \surnamestart Kissinger\surnameend} \&
  \bibinfo{author}{Sander \surnamestart Uijlen\surnameend}
  (\bibinfo{year}{2019}): \emph{\bibinfo{title}{{A categorical semantics for
  causal structure}}}.
\newblock {\sl \bibinfo{journal}{{Logical Methods in Computer Science}}}
  \bibinfo{volume}{{Volume 15, Issue 3}}, \doi{10.23638/LMCS-15(3:15)2019}.

\bibitemdeclare{article}{pastro_street}
\bibitem{pastro_street}
\bibinfo{author}{Craig \surnamestart Pastro\surnameend} \&
  \bibinfo{author}{Ross \surnamestart Street\surnameend}
  (\bibinfo{year}{2008}): \emph{\bibinfo{title}{Doubles for Monoidal
  Categories}}.
\newblock {\sl \bibinfo{journal}{Theory and Applications of Categories}}
  \bibinfo{volume}{21}(\bibinfo{number}{4}), pp. \bibinfo{pages}{61--75}.

\bibitemdeclare{misc}{riley_optics}
\bibitem{riley_optics}
\bibinfo{author}{Mitchell \surnamestart Riley\surnameend}
  (\bibinfo{year}{2018}): \emph{\bibinfo{title}{Categories of Optics}},
  \doi{10.48550/arXiv.1809.00738}.
\newblock \eprint{1809.00738}.

\bibitemdeclare{misc}{roman_comb}
\bibitem{roman_comb}
\bibinfo{author}{Mario \surnamestart Román\surnameend} (\bibinfo{year}{2020}):
  \emph{\bibinfo{title}{Comb Diagrams for Discrete-Time Feedback}},
  \doi{10.48550/arXiv.2003.06214}.
\newblock \eprint{2003.06214}.

\bibitemdeclare{misc}{roman_optics}
\bibitem{roman_optics}
\bibinfo{author}{Mario \surnamestart Román\surnameend} (\bibinfo{year}{2020}):
  \emph{\bibinfo{title}{Profunctor optics and traversals}},
  \doi{10.48550/arXiv.2001.08045}.
\newblock \eprint{2001.08045}.

\bibitemdeclare{article}{roman_coend}
\bibitem{roman_coend}
\bibinfo{author}{Mario \surnamestart Román\surnameend} (\bibinfo{year}{2021}):
  \emph{\bibinfo{title}{Open Diagrams via Coend Calculus}}.
\newblock {\sl \bibinfo{journal}{Electronic Proceedings in Theoretical Computer
  Science}} \bibinfo{volume}{333}, p. \bibinfo{pages}{65–78},
  \doi{10.4204/eptcs.333.5}.

\bibitemdeclare{article}{selinger_cpm}
\bibitem{selinger_cpm}
\bibinfo{author}{Peter \surnamestart Selinger\surnameend}
  (\bibinfo{year}{2007}): \emph{\bibinfo{title}{Dagger Compact Closed
  Categories and Completely Positive Maps: (Extended Abstract)}}.
\newblock {\sl \bibinfo{journal}{Electronic Notes in Theoretical Computer
  Science}} \bibinfo{volume}{170}, pp. \bibinfo{pages}{139--163},
  \doi{10.1016/j.entcs.2006.12.018}.
\newblock \bibinfo{note}{Proceedings of the 3rd International Workshop on
  Quantum Programming Languages (QPL 2005)}.

\bibitemdeclare{article}{tambara}
\bibitem{tambara}
\bibinfo{author}{Daisuke \surnamestart Tambara\surnameend}
  (\bibinfo{year}{2006}): \emph{\bibinfo{title}{{Distributors on a tensor
  category}}}.
\newblock {\sl \bibinfo{journal}{Hokkaido Mathematical Journal}}
  \bibinfo{volume}{35}(\bibinfo{number}{2}), pp. \bibinfo{pages}{379 -- 425},
  \doi{10.14492/hokmj/1285766362}.

\bibitemdeclare{misc}{zxwcs}
\bibitem{zxwcs}
\bibinfo{author}{John \surnamestart van~de Wetering\surnameend}
  (\bibinfo{year}{2020}): \emph{\bibinfo{title}{ZX-calculus for the working
  quantum computer scientist}}, \doi{10.48550/arXiv.2012.13966}.
\newblock \eprint{2012.13966}.

\end{thebibliography}


\end{document}